\documentclass[runningheads,envcountsame]{llncs}
\usepackage[T1]{fontenc}
\usepackage{lmodern}

\usepackage{amsmath}
\usepackage{amsthm}
\usepackage{amssymb}
\usepackage{xcolor}
\definecolor{refcol}{RGB}{0,51,153}
\definecolor{citecol}{RGB}{44,160,46}
\definecolor{urlcol}{RGB}{0,51,153} 
\usepackage[hidelinks,colorlinks,linkcolor=refcol,citecolor=citecol,urlcolor=urlcol
  ]{hyperref}
\usepackage{cleveref}
\usepackage{stmaryrd}
\usepackage[createShortEnv]{proof-at-the-end}

\usepackage{IEEEtrantools}

\newif\iflongversion\longversiontrue  
\newif\ifdebugversion\debugversiontrue  

\longversiontrue 
\debugversionfalse

\iflongversion
\newcommand{\citelongversion}[1]{\Cref{#1}\xspace}
\else
\newcommand{\citelongversion}[1]{the extended version \cite{DBLP:conf/ijcar/AbouElWafaP26}\xspace}
\pratendSetGlobal{no link to proof}
\fi

\usepackage{paralist}
\newenvironment*{caselist}[1][- Case]{%
    \newcommand{\caseof}[1]{\item[\noindent\emph{#1 ##1}:]}
    \begin{inparaitem}
        }{%
    \end{inparaitem}
}

\usepackage{enumitem}

\usepackage{math}
\usepackage[prefixflatinterpret,bracketmodalinterpret,fixformat,silentconst,sidenotecalculus,longseqcontext,boldmquant]{logic}
\usepackage[prefixflatinterpret,bracketmodalinterpret,fixformat,silentconst,differentialdL,simplenames]{dL}%

\definecolor{darkishgray}{rgb}{.35,.35,.35}
\renewcommand*{\irrulename}[1]{\text{\textcolor{darkishgray}{\upshape#1}}}
\renewcommand{\linferenceRuleLabelPrefix}{}
\renewcommand{\linferenceRuleNameSeparation}{\hspace{1em}}
\renewcommand{\linferPremissSeparation}{\hspace{0.4cm}}
 
\renewcommand{\ptest}[1]{{?#1}}
 
\definecolor{todored}{RGB}{200,60,60}     
\definecolor{todoblue}{RGB}{60,100,180}   
\definecolor{remembergreen}{RGB}{40,140,90}
\definecolor{todogray}{RGB}{140,140,140}

\definecolor{highlight}{RGB}{0,0,0}  
\definecolor{annotgrey}{gray}{0.25}

\ifdebugversion
\definecolor{pos}{RGB}{0,114,178}      % deep blue
\definecolor{neg}{RGB}{213,94,0}      % vermillion (orange-red)
\definecolor{rpos}{RGB}{0,158,115}    % bluish green
\definecolor{rneg}{RGB}{204,121,167}  % soft magenta
\newcommand{\todo}[1]{\footnote{\textcolor{todored}{Todo: #1}}}
\else
\definecolor{pos}{RGB}{0,0,0}     % deep blue
\definecolor{neg}{RGB}{0,0,0}      % vermillion (orange-red)
\definecolor{rpos}{RGB}{0,0,0}    % bluish green
\definecolor{rneg}{RGB}{0,0,0}  % soft magenta
\newcommand{\todo}[1]{}
\fi
\definecolor{setrelation}{RGB}{0, 0, 0}  
 
\newcommand{\highlight}[1]{{\color{highlight}#1}}

\newcommand{\fml}{\varphi}
\newcommand{\fmlb}{\rho}

\newcommand{\prog}{\alpha}
\newcommand{\progb}{\gamma}

\newcommand{\strictsymb}{\color{annotgrey}\circ}
\newcommand{\laxsymb}{\color{annotgrey}\bullet}
\newcommand{\rstrictsymb}{\color{annotgrey}\diamond}
\newcommand{\rlaxsymb}{\color{annotgrey}\rotatebox{45}{\scalebox{0.3}{$\blacksquare$}}}

\newcommand{\posof}[1]{{\color{pos}#1}^{\scriptscriptstyle{\color{pos}\kern-1pt\strictsymb}}}
\newcommand{\negof}[1]{{\color{neg}#1}^{\scriptscriptstyle{\color{neg}\kern-1pt\laxsymb}}}

\newcommand{\posrof}[1]{{\color{rpos}#1}^{\scriptscriptstyle{\color{rpos}\kern-1pt\rstrictsymb}}}
\newcommand{\negrof}[1]{{\color{rneg}#1}^{\scriptscriptstyle{\color{rneg}\kern-1pt\rlaxsymb}}}

\newcommand{\posfml}{\posof{\fml}}
\newcommand{\posfmlb}{\posof{\fmlb}}
\newcommand{\negfml}{\negof{\fml}}
\newcommand{\negfmlb}{\negof{\fmlb}}

\newcommand{\posprog}{\posof{\prog}}
\newcommand{\posprogb}{\posof{\progb}}
\newcommand{\negprog}{\negof{\prog}}
\newcommand{\negprogb}{\negof{\progb}}

\newcommand{\posrfml}{\posrof{\fml}}
\newcommand{\posrfmlb}{\posrof{\fmlb}}
\newcommand{\negrfml}{\negrof{\fml}}
\newcommand{\negrfmlb}{\negrof{\fmlb}}

\newcommand{\posrprog}{\posrof{\prog}}
\newcommand{\posrprogb}{\posrof{\progb}}
\newcommand{\negrprog}{\negrof{\prog}}
\newcommand{\negrprogb}{\negrof{\progb}}

\newcommand{\term}{v}
\newcommand{\termb}{w}

\newcommand{\ivar}{x}
\newcommand{\ivarb}{y}
\newcommand{\ivarc}{z}

\newcommand{\ival}{a}

\newcommand{\strictfml}{strict\xspace}
\newcommand{\Strictfml}{Strict\xspace}
\newcommand{\laxfml}{non-strict\xspace}
\newcommand{\Laxfml}{Non-strict\xspace}
\newcommand{\exactprg}{strict\xspace}
\newcommand{\approxprog}{non-strict\xspace}

\newcommand{\sem}[1]{{\llbracket}#1{\rrbracket}}

\newcommand{\semopen}[1]{{\color{pos}\llbracket}#1{\color{pos}\rrbracket}}
\newcommand{\semclosed}[1]{{\color{neg}\llbracket}#1{\color{neg}\rrbracket}}

\newcommand{\semexactwr}[2]{\awinningregion{{\color{pos}\llbracket}#1{\color{pos}\rrbracket}}{#2}}
\newcommand{\semapproxwr}[2]{\awinningregion{{\color{neg}\llbracket}#1{\color{neg}\rrbracket}}{#2}}

\newcommand{\semwr}[2]{\awinningregion{{\llbracket}#1{\rrbracket}}{#2}}

\newcommand{\semwrp}[2]{\semwr{#1}{(#2)}}
\newcommand{\semexactwrp}[2]{\semexactwr{#1}{(#2)}}
\newcommand{\semapproxwrp}[2]{\semapproxwr{#1}{(#2)}}

\newcommand{\semexactrepeat}[4][\compactstateset]{{{\color{pos}\llbracket}#2{\color{pos}\rrbracket}^{#4}}{#3}}
\newcommand{\semexactrepeatp}[4][\compactstateset]{\semexactrepeat[#1]{#2}{(#3)}{#4}}

\newcommand{\semterm}[2][\state]{#1\llbracket#2\rrbracket}

\newcommand{\union}{\cup}
\newcommand{\intersection}{\cap}
\renewcommand{\complement}[1]{#1^\mathrm{c}}

\newcommand{\states}{\mathcal{S}}
\newcommand{\state}{\omega}
\newcommand{\stateb}{\nu}

\newcommand{\staterepvarbyval}[3][\state]{#1_{#2}^{#3}}
\newcommand{\staterepvarsbyvals}[3][\state]{#1_{#2}^{#3}}

\newcommand{\openset}{U}
\newcommand{\opensetb}{V}
\newcommand{\opensetc}{W}

\newcommand{\closedset}{C}

\newcommand{\rel}{R}
\newcommand{\relb}{S}
\newcommand{\transitiveclosure}[1]{#1^*}
\newcommand{\awinningregion}[2]{{#1}^{-1}#2}
\newcommand{\awinningregionp}[2]{\awinningregion{#1}{(#2)}}

\newcommand{\topspace}{X}
\newcommand{\topspaceb}{Y}
\newcommand{\topspaceelement}{x}
\newcommand{\topspaceelementb}{y}
\newcommand{\topspaceelementc}{z}

\newcommand{\variables}{\mathcal{V}}

\newcommand{\ratconst}{q}

\newcommand{\rdL}{\ensuremath{\dL_{\mathsf{r}}}\xspace}
\newcommand{\rrdL}{\ensuremath{\dL_{\mathsf{r}\diamond}}\xspace}

\newcommand{\myvect}[1]{%
    \mathsf{#1}
}
\newcommand{\ode}[2]{\D{#1}=#2(#1)} 

\newcommand{\synodevec}[2]{\D{\myvect{#1}}={\myvect{#2}}}
\newcommand{\synodevecexpl}[3][n]{\D{#2_1}={#3_1}\synodecomp\ldots\synodecomp\D{#2}_#1={#3_#1}}
\newcommand{\synodecomp}{,}
\newcommand{\flowdomain}[1]{\Omega_{#1}}
\newcommand{\odeflowfunc}[1]{\Phi_{#1}}
\newcommand{\odeflow}[3]{\odeflowfunc{#1}(#2,#3)}

\newcommand{\lforallbdd}[2][]{\forall_{#1}{#2}\;}
\newcommand{\lexistsbdd}[2][]{\exists_{#1}{#2}\;}

\newcommand{\syninterv}{I}

\newcommand{\timevar}{\tau}

\newcommand{\mtime}{t}
\newcommand{\mtimeb}{s}
\newcommand{\evolutionboundterm}{k}
\newcommand{\evolutionboundtermsem}{K}

\newcommand{\lipsch}{\ell}
\newcommand{\fbdd}{m}
\newcommand{\mlipsch}{L}
\newcommand{\mfbdd}{M}
\newcommand{\iterationprogabb}{\posrof{\mathrm{step}}}
\newcommand{\secondderbound}{\posrof{\mathrm{bound}}}

\newcommand{\staterestr}[2][\state]{#1{\upharpoonright}#2}

\newcommand{\vecfield}{f}
\newcommand{\syntvectorfield}[2]{\vecfield_{#2}}

\newcommand{\stateevolve}[2]{\stateevolvefunc{#1}(#2)}
\newcommand{\stateevolvefunc}[1]{\Psi_{#1}}

\newcommand{\rankof}[1]{\mathrm{rank}(#1)}

\newcommand{\provrel}{\prec}
\newcommand{\provrelhalf}{\preccurlyeq}
\newcommand{\provsetrel}{\mathrel{\color{setrelation}\prec}}
\newcommand{\provsetrelhalf}{\mathrel{\color{setrelation}\preccurlyeq}}

\newcommand{\setoffmlsets}{\Sigma}
\newcommand{\fmlset}{\Gamma}
\newcommand{\fmlsetb}{\Delta}

\newcommand{\fmlreplacevarby}[3][\fml]{#1_{#2}^{#3}}
\newcommand{\termreplacevarby}[3][term]{#1_{#2}^{#3}}
\newcommand{\fmlsafereplacevarby}[3][term]{#1\tfrac{#3}{#2}}

\newcommand{\epsint}[2][\varepsilon]{#2_{\myvect{\ivar}\pm#1}}

\newcommand{\sgandalfsymb}{\scalebox{0.8}{$\overleftrightarrow{\,\mathrm{\Delta}\,}$}\xspace}

\newcommand{\syntnorm}[1]{{\abs{{#1}}}}

\newcommand{\valid}[1]{\vDash#1}
\newcommand{\prov}[1]{\vdash#1}

\newcommand{\lnotp}[1]{\lnot(#1)}

\newcommand{\eqcref}[2]{\Cref{#1}\eqref{#2}}

\newcommand{\piter}[2]{#1^{\leq#2}}

\newcommand{\robnot}[1]{{\sim}{}#1}
\newcommand{\robnotp}[1]{\robnot{(#1)}}

\newcommand{\wellposed}{robustly determined\xspace}

\renewcommand{\humod}[2]{#1{\eqdef}#2}
 
\AtBeginDocument{
  \newsavebox{\Rval}%
  \sbox{\Rval}{$\mathbb{R}$}
}

\begin{document}
\title{Complete Robust Hybrid Systems Reachability}
\author{Noah {Abou El Wafa}\orcidID{0000-0002-3987-9919} \and
Andr{\'{e}} Platzer\orcidID{0000-0001-7238-5710}}
\institute{Karlsruhe Institute of Technology, Karlsruhe, Germany
  \email{\{noah.abouelwafa,platzer\}@kit.edu}}
\authorrunning{Noah {Abou El Wafa} and Andr{\'{e}} Platzer}
\maketitle
\begin{abstract}
  This paper introduces \emph{robust differential dynamic logic} (a fragment of differential dynamic logic) to specify and reason about \emph{robust hybrid systems}.
  By small, natural, and practically meaningful syntactic restrictions, specifications are ensured (by construction) to be topologically open and, thus, robust with respect to infinitesimal perturbations without explicit quantitative margins of error in the syntax or in proofs.
  The main result is a proof of \emph{absolute completeness} of robust differential dynamic logic for reachability properties of general hybrid systems.
  The proof is constructive, self-contained, and demonstrates how robustly-correct hybrid systems reachability specifications can be automatically verified through proof.
  \keywords{Differential dynamic logic \and Robust hybrid systems \and Hybrid systems reachability}
\end{abstract}
% 
% 
%%%%%%%%%%%%%%%%%%%%%%%%%%%%%

\section{Introduction}
\label{sec:introduction}

The study and verification of hybrid systems properties is of fundamental importance for many cyber-physical systems applications.
Yet, even reachability is decidable \emph{only} for the most restrictive classes of hybrid systems and the simplest properties \cite{DBLP:conf/stoc/HenzingerKPV95,DBLP:conf/concur/CassezL00}.
The theoretical hardness is often attributed to \emph{excessive precision} that does not reflect physical reality, which in practice always involves some error or imprecision \cite{DBLP:conf/csl/Franzle99}.
This has led to a bifurcation of approaches to hybrid systems verification:
On the one hand, there are approaches that \emph{weaken the semantics} to regain decidability by admitting positive \(\delta>0\) errors in the decisions, for example \(\delta\)-decidability \cite{DBLP:conf/tacas/KongGCC15} or interval approximation \cite{DBLP:journals/fmsd/Ratschan14}.
On the other hand, precise and deductive approaches, such as differential dynamic logic \cite{DBLP:journals/jar/Platzer08}, retain exact semantics, but rely on \emph{undecidable oracles} \cite{DBLP:conf/lics/Platzer12a} or other restrictions \cite{DBLP:journals/jacm/PlatzerT20} for completeness.
But neither of these approaches is at the same time free of perturbations or errors \emph{and} complete in general.

This paper bridges the gap, retaining \emph{exact semantics} and nonetheless achieving \emph{completeness}, by focusing on realistic \emph{robust properties} of hybrid systems in a logical calculus.
Importantly, robustness is a topological property (openness) that does not require any explicit perturbations, margins of error, or positive disturbances \(\delta>0\).
This notion of \emph{infinitesimal robustness} is captured purely syntactically within differential dynamic logic \dL \cite{DBLP:journals/jar/Platzer08,DBLP:journals/jar/Platzer17}, a highly expressive, powerful logic, which extends first-order dynamic logic with differential equations to reason about continuous dynamics.
The \emph{robust} fragment (robust differential dynamic logic) syntactically restricts boundary phenomena, which demand unnecessary precision that cannot be delivered symbolically or computationally.
This restriction is simple, intuitive and modest, as it demands only that in positive positions (within an even number of negations) all inequalities are strict, and in negative positions (within an odd number of negations) all inequalities are weak.
Intuitively, any claim to be proven should be robustly true (i.e., strict or open), while assumptions should conservatively be considered marginal (non-strict or closed).
Little of the \emph{practical} expressiveness of \dL is lost, since the difference is only between \(x>0\) and \(x\geq 0\), which is not practically relevant for the verification of cyber-physical systems.
Robust \dL provides a clean, simple, syntactic way of describing and reasoning deductively about general robust hybrid systems properties without any need for manual error management.

The \emph{large expressiveness} of \dL means that it is impossible to prove all valid formulas in a sound and effective proof calculus \cite{DBLP:conf/lics/Platzer12a} or even to decide the truth of all reachability problems formulated in \dL (a consequence of Matiyasevich's theorem).
It is shown that robustness solves this problem by reducing the actual expressiveness to a completely axiomatizable level without sacrificing practical expressiveness.
So robustness \emph{uniformly} handles all difficulties that arise when proving a valid reachability formula.
For example, valid existential quantifiers can be correctly instantiated by rational values and unbounded reachability can be bounded by compactness.
Unlike some previous completeness theorems \cite{DBLP:journals/jar/Platzer08,DBLP:conf/lics/Platzer12a,DBLP:journals/jar/Platzer17}, completeness for robust hybrid systems reachability is \emph{not relative} and does not require an undecidable oracle: any true robust hybrid systems reachability property is provable unconditionally.
Importantly, reachability properties are not time-bounded, but can involve hybrid systems running for unbounded lengths of time and repeating loops for an arbitrary number of iterations.

\subsubsection{Contributions.}
First, \emph{robust differential dynamic logic} is introduced as a complete fragment of \dL capable of expressing all robust properties of hybrid systems.
It is shown that the semantics naturally reflects this robustness through a topological characterization (\Cref{sec:doandrdl}).
Second, a simple iterative axiomatization of ODE properties is presented (\Cref{sec:calculus}).
Third, robustness is used to provide a constructive completeness proof for robust hybrid systems reachability properties.
This \emph{non-relative completeness proof} for a general class of hybrid systems properties overcomes the previous reliance on undecidable oracles and restrictions to algebraic hybrid systems (\Cref{sec:completeness}).

\iflongversion
\else
Full proofs of all results can be found in \citelongversion{ }.
\fi

\subsubsection{Related Work.}
Hybrid systems verification is undecidable \cite{DBLP:conf/stoc/HenzingerKPV95}.
A \(\delta\)-decidability algorithm handles logical reachability analysis via SMT solving approximately for accuracy \(\delta>0\) \cite{DBLP:conf/tacas/KongGCC15}.
Similar syntactic perturbations have been used to prove quasi-decidability of safety for hybrid systems \cite{DBLP:journals/fmsd/Ratschan14}.
Reachability for restricted robust systems by approximation has been proved decidable \cite{DBLP:conf/csl/Franzle99}.
Unlike these approaches, robust differential dynamic logic handles robustness in a simple and topological way, without requiring explicit approximations.

Completeness of \dL relative to its continuous fragment and its discrete fragment has been shown \cite{DBLP:conf/lics/Platzer12a}. Absolute completeness was established for algebraic properties of algebraic hybrid systems \cite{DBLP:journals/jacm/PlatzerT20} and invariance properties of hybrid systems \cite{DBLP:journals/jacm/PlatzerT20,DBLP:conf/adhs/TanP21}.
This paper shows \emph{absolute} completeness for robust reachability properties of \emph{general hybrid systems}, extending the completeness of open first-order properties of purely continuous dynamical systems \cite{DBLP:journals/jacm/PlatzerQ25}.

Conceptually related is the proof of completeness of the uninterpreted diamond fragment of first-order dynamic logic \cite{DBLP:journals/iandc/Schmitt84}.
The connection between open sets of reals and decidability is fundamental in computable analysis \cite{DBLP:series/txtcs/Weihrauch00}.

\section{Differential Dynamic Logic and Robustness}
\label{sec:doandrdl}

This section first recalls the syntax and semantics of differential dynamic logic \dL \cite{DBLP:journals/jar/Platzer08} (\Cref{sec:syntaxofdl,sec:dlsemantics}).
Robust differential dynamic logic \rdL and its reachability fragment \rrdL are then defined as syntactic fragments of \dL (\Cref{sec:robustdl,sec:robustreachdl}).
The semantics of \rrdL are shown to be robust under infinitesimal perturbations and thus capture the \emph{robustly true} formulas (\Cref{sec:robustnesssem}).
 
\subsection{Syntax of Differential Dynamic Logic}\label{sec:syntaxofdl}
 
Formally, \dL extends first-order dynamic logic for (discrete) programs, with continuous programs to model dynamics of hybrid systems.
As \dL is interpreted over the real numbers, the terms of \dL are ordinary terms of real arithmetic:
\begin{align*}
  \term,\termb & ::=\ivar \mid \ratconst\mid\term+\termb \mid \term\cdot\termb
               &
  \text{\textcolor{annotgrey}{\dL terms}}\hspace{-10em}
\end{align*}
where \(\ivar\in\variables\) is a first-order variable from some fixed set \(\variables\) of variables and \(\ratconst\) is a rational constant.
Terms are essentially syntactic descriptions of polynomials in the variables \(\variables\) with rational coefficients (\(\rationals[\variables]\)).
Symbols \(\myvect{\ivar}\) and \(\myvect{\term}\) implicitly range over \emph{tuples} of variables \(\ivar_1,\ldots,\ivar_n\) and terms \(\term_1,\ldots,\term_n\), respectively.

Formulas \(\fml\) and hybrid programs \(\prog\) of \dL are described by:
\begin{align*}
  \fml,\fmlb   & ::= \term >\termb \mid \lnot\fml \mid \fml\lor\fmlb\mid \lexists{\ivar}\fml \mid \ddiamond{\prog} \fml
               &
  \text{\textcolor{annotgrey}{\dL formulas}}
  \\
  \prog,\progb & ::= \humod{\ivar}{\term} \mid \ptest{\fml} \mid \pchoice{\prog}{\progb} \mid \prog;\progb \mid \prepeat{\prog} \mid \pevolvein{\synodevec{\ivar}{\term}}{\fml}
               &
  \text{\textcolor{annotgrey}{\dL programs}}
\end{align*}
where \(\myvect{\ivar}\) and \(\myvect{\term}\) are tuples (of the same length) of variables and terms, respectively, so that \(\pevolvein{\synodevec{\ivar}{\term}}{\fml}\) describes the symbolic ordinary differential equation \(\synodevecexpl[n]{\ivar}{\term}\) on the domain defined by the formula \(\fml\).

For convenience it is assumed, without loss of generality, that there is a special variable \(\tau\) and every continuous evolution modality contains \(\D{\tau}=1\).
(This does not change the proof theory by the differential ghost axiom \cite{DBLP:journals/jacm/PlatzerT20}.)

\subsubsection{Interpretation of \dL.}

The modalities \(\ddiamond{\prog}\fml\) of \dL are parameterized by \emph{hybrid programs} \(\prog\), which describe hybrid systems \emph{syntactically}.
A formula of the form \(\ddiamond{\prog}\fml\) asserts the existence of an execution of the hybrid program \(\prog\) that reaches a state in which \(\fml\) is true.
A hybrid program \(\prog\) describes a non-deterministic hybrid system.
The \emph{deterministic assignment} \(\humod{\ivar}{\term}\) models a simple state change, where the value of the variable \(\ivar\) is updated to the value of the term \(\term\).
The \emph{test program} \(\ptest{\fml}\) proceeds without effect as long as \(\fml\) is true in the current state and aborts otherwise.
The \emph{non-deterministic choice program} \(\pchoice{\prog}{\progb}\) is a non-deterministic (existential) branch: the evolution may continue according to one of the two hybrid programs \(\prog\) or \(\progb\).
A \emph{sequential composition} \(\prog;\progb\) of hybrid programs executes \(\progb\) after \(\prog\).
For finite iteration, i.e., for \(\prog;\ldots;\prog\) repeated \(n\) times, write \(\prog^n\) and let \(\piter{\prog}{n}\equiv\ptest{\mathrm{true}\cup\prog^1\cup\prog^2\cup\ldots\cup\prog^n}\) stand for bounded non-deterministic repetition.
The \emph{iteration program} \(\prepeat{\prog}\) represents finite repetition of \(\prog\) for an arbitrary (non-deterministic) number of times.
Finally, the \emph{continuous program} \(\pevolvein{\synodevec{\ivar}{\term}}{\fml}\) describes the continuous evolution of the state along the solution of the differential equation for a non-deterministic amount of time, as long as \(\fml\), called the \emph{evolution domain}, is satisfied.

\subsubsection{Notation and Abbreviations.}
As usual, \(\geq\), \(=\), \(\limply\) and the dual connectives are defined via negation.
Moreover, the box modalities \(\dbox{\prog}\fml\) abbreviate \(\lnot\ddiamond{\prog}\lnot\fml\) to say that every execution of \(\prog\) ends in a state in which \(\fml\) is true.
Syntactically, for a tuple of terms \(\myvect{\term}\), the \(\infty\)-norm is used via the abbreviation
\begin{align*}
  \syntnorm{\myvect{\term}}<\termb & ~\equiv~ -\termb<\term<\termb\land\ldots\land -\termb<\term<\termb
\end{align*}
and similarly for \(\syntnorm{\myvect{\term}}\sim\termb\) where \({\sim}\in\{\leq,>,\geq\}\).

\subsubsection{Free Variables.}
The free variables of a formula are, as usual, the variables \(\ivar\) that appear somewhere in \(\fml\) not within the scope of a quantifier or assignment that \emph{necessarily} binds \(\ivar\).
For example, \(\ivar\) is free in \(\dbox{\humod{\ivarb}{\ivar}}\ivarb>0\) and \(\ivarb\) is not free.
(A formal definition is in the literature \cite[Definition 8]{DBLP:conf/cade/Platzer15}.)
The set of free variables of \(\fml\) is denoted \(\freevars{\fml}\).
A \emph{sentence} is a formula without free variables. 

\iflongversion
  \subsubsection{Polarity.}
  An instance of a subformula \(\fml\) inside a formula \(\fmlb\) is said to appear with \emph{positive polarity} if it is within the scope of an even number of negations and box modalities. (Since box modalities and tests can be nested.)
  Otherwise, it is said to appear with \emph{negative polarity}.
  To understand why box modalities affect polarity, consider, for example, the formula \(\dbox{\ptest{\fml}}\fmlb\).
  As this formula is equivalent to \(\lnot\fml\lor\fmlb\), it is important that the subformula \(\fml\) is considered to appear in negative polarity.
\else
\fi
 
\subsection{Semantics of Differential Dynamic Logic}\label{sec:dlsemantics}
The semantics of differential dynamic logic intuitively describes the intended behaviour of hybrid systems models.
First basic definitions and notations are introduced, before the denotational semantics of \dL \cite{DBLP:journals/jar/Platzer08} is recalled.
 
\subsubsection{Reachability Relations.}
Hybrid systems behaviour is semantically described by a \emph{reachability relation} \(\rel\) that describes which states can be reached from a given initial state.
The set \(\awinningregionp{\rel}{\openset}\) of states from which a target set \(\openset\subseteq\topspace\) of states is reachable along a relation \(\rel\subseteq \topspace\times\topspace\) is defined by
\(\awinningregionp{\rel}{\openset}=\{\topspaceelement :\mexists{\topspaceelementb\in\openset} (\topspaceelement,\topspaceelementb)\in\rel\}.\)
The composition of two relations \(\rel,\relb\) is
\(\rel\circ\relb=\{(\topspaceelement,\topspaceelementc) : \mexists{\topspaceelementb} (\topspaceelement,\topspaceelementb)\in\rel\mand(\topspaceelementb,\topspaceelementc)\in\relb\}\)
and \(\awinningregionp{(\rel\circ\relb)}{\openset}=\awinningregionp{\rel}{\awinningregionp{\relb}{\openset}}\).
The  \(n\)-fold composition of \(\rel\) with itself is written \(\rel^n=\rel\circ\ldots\circ\rel\).
The \emph{reflexive transitive closure} of a relation \(\rel\) is \(\transitiveclosure{\rel} = \bigcup_{n\geq 0}\rel^n\).

\subsubsection{Differential Equations.}
The continuous part of a hybrid system is usually described by an ordinary differential equation modeling the dynamics of motion.
In this paper, hybrid systems are specified symbolically, and attention is restricted to \emph{polynomial differential equations} \(\ode{x}{\vecfield}\) where \(\vecfield:\reals^n\to\reals^n\) is a polynomial with rational coefficients.

By the Picard-Lindelöf theorem \cite{walter}, rational polynomial differential equations locally have unique solutions for every initial value \(x_0\in\reals^n\) and every solution extends to a solution on an open maximal interval of existence.
Let \(I_{x_0}\subseteq \reals\) denote this maximal interval for the initial value problem \(\ode{x}{\vecfield}\), \(x(0)=x_0\) and let \(\odeflow{\vecfield}{x_0}{\cdot}:I_{x_0}\to\reals^n\) denote its unique solution.
The partial function \(\odeflowfunc{\vecfield}: \flowdomain{\vecfield}= \{(x_0,t)\in\reals^n\times\reals:t\in I_{x_0}\}\to\reals^n\) is the flow of the differential equation.
The flow is \emph{continuous} on the domain, and if the coefficients of the polynomial \(\vecfield_\alpha\) vary continuously in \(\alpha\), then \(\odeflowfunc{\vecfield_\alpha}(x_0,t)\) is continuous in \(\alpha\), \(x_0\) and \(t\).
Proofs of these facts can be found in the literature~\cite{walter}.

\subsubsection{Denotational Semantics}
The semantics of \dL is defined in the literature \cite{DBLP:conf/lics/Platzer12a} and recalled here for completeness.
A \emph{state} is a function \(\state:\variables\to\reals\) from the set of variables \(\variables\) to real numbers, which assigns a value \(\state(\ivar)\) to every variable \(\ivar\) and \(\states=\reals^\variables\) is the set of \emph{all} states equipped with the product topology.
For a tuple of variables \(\myvect{\ivar}\) and a tuple \(\myvect{\ival}\) of values of the same length, \(\staterepvarsbyvals[\state]{\myvect{\ivar}}{\myvect{\ival}}\) is the state that coincides with \(\state\) everywhere, except \(\staterepvarsbyvals[\state]{\myvect{\ivar}}{\myvect{\ival}}(\ivar_i)=\ival_i\) for \(\ivar_i\in\myvect{\ivar}\).
The restriction of a state \(\state\) to a tuple~\(\myvect{\ivar}\) of variables is the tuple \(\staterestr[\state]{\myvect{\ivar}}=(\state(\ivar_1),\ldots,\state(\ivar_n))\).

The semantics of a term with respect to a state \(\semterm[\state]{\term}\in\reals\) is a real number:
\begin{align*}
   & \semterm[\state]{\ivar}
  =
  \hspace{-1pt}
  \state(\ivar)
   &
   & \semterm[\state]{\ratconst}
  =
  \ratconst
   &
  \semterm[\state]{\term+\termb}
  =
  \hspace{-1pt}
  \semterm[\state]{\term}+\semterm[\state]{\termb}
   &
   &                             &
  \semterm[\state]{\term\cdot\termb}
  \hspace{-1pt}
  =
  \semterm[\state]{\term}\cdot\semterm[\state]{\termb}
\end{align*}
The semantics of a formula \(\fml\) is the set \(\sem{\fml}\subseteq\states\) of states which satisfy \(\fml\) and the semantics of a hybrid program \(\prog\) is the reachability relation \(\sem{\prog}\subseteq{\states}\times{\states}\) described by the program:
\begin{align*}
   & \sem{\term>\termb}
  =
  \{\state : \semterm[\state]{\term} > \semterm[\state]{\termb}\}
   &                    &
  \sem{\fml\lor\fmlb}
  =
  \sem{\fml}\union\sem{\fmlb}
  &                    &
  \sem{\lnot\fml}
  =
  \states\setminus\sem{\fml}
  \\
   &
  \sem{\lexists{\ivar}\fml}
  =
  \{\state : \mexists{\ival\in\reals} \staterepvarbyval[\state]{\ivar}{\ival}\in\sem{\fml}\}
   &
   &
  \sem{\ddiamond{\prog}\fml}
  =
  \semwrp{\prog}{\sem{\fml}}
  \\
   &
  \sem{\humod{\ivar}{\term}}
  =
  \{(\state, \staterepvarbyval[\state]{\ivar}{\semterm[\state]{\term}}) : \state\in \states\}
   &
   & \sem{\ptest{\fml}}
  =
  \{(\state, \state) : \state\in \sem{\fml}\}
   &                    &
  \sem{\prepeat{\prog}}
  =
  \transitiveclosure{\sem{\prog}}
  \\
   &
  \sem{\pchoice{\prog}{\progb}}
  =
  \sem{\prog}\union\sem{\progb}
   &                    &
  \sem{\prog;\progb}
  =
  \sem{\prog}\circ\sem{\progb}
\end{align*}
For a continuous program \(\synodevec{\ivar}{\term}\) with \(n\) variables and a state \(\state\), the polynomial vector field \(\syntvectorfield{\synodevec{\ivar}{\term}}{\state}:\reals^n\to\reals^n\) is defined by \(\syntvectorfield{\synodevec{\ivar}{\term}}{\state}(\myvect{\ival}) = \staterestr[{\semterm[{\staterepvarsbyvals[\state]{\myvect{\ivar}}{\myvect{\ival}}}]{\myvect{\term}}}]{\myvect{\ivar}},\)
and the semantics of a continuous program is
\begin{align*}
  \sem{\pevolvein{\synodevec{\ivar}{\term}}{\fml}}
  = &
  \{(\state,\stateevolve{\state}{\mtime}) : \mtime\geq 0\mand (\state,\mtime)\in \flowdomain{\syntvectorfield{\synodevec{\ivar}{\term}}{\state}}\mand\stateevolve{\state}{[0,\mtime]}\subseteq\sem{\fml}\}
  \\
    & \text{where } \stateevolve{\state}{\mtime}=\staterepvarsbyvals[\state]{\myvect{\ivar}}{\odeflow{\syntvectorfield{\synodevec{\ivar}{\term}}{\state}}{\staterestr[\state]{\myvect{\ivar}}}{\mtime}}
  \text{ is the state-flow}.
\end{align*}
As usual, a formula \(\fml\) is said to be \emph{valid} if \(\sem{\fml}=\states\).

\subsection{Robust Differential Dynamic Logic}
\label{sec:robustdl}

Robust differential dynamic logic (\rdL) is a careful refinement of \dL for the purpose of \emph{syntactically} ensuring that any formula describes properties with locally robust truth.
An example of such a formula is \(\term>\termb\).
If it is true in some interpretation of the variables, then it remains true under a sufficiently small perturbation of the variables.
However, the formula \(-\ivar^2\geq\ivar^2\) is \emph{not} robust, as it is true only in the \emph{razor-edge case} when \(
\ivar\) is interpreted as \(0\), and it is false for any other interpretation of \(\ivar\), no matter how close.

Robust \dL generalizes the distinction between strict inequality (\(>\)) and weak inequality (\(\geq\)) from real numbers to properties of hybrid systems generally.
There are two kinds of \rdL formulas: \emph{\strictfml formulas} \(\posfml\) generalizing strict inequality and \emph{\laxfml formulas} \(\negfml\) generalizing weak inequality, which are described by:
\begin{align*}
  \posfml,\posfmlb & ::= \term >\termb \mid \posfml\lor\posfmlb \mid \posfml\land\posfmlb \mid \lexists{\ivar}\posfml \mid \lforall{\ivar} \posfml \mid \ddiamond{\posprog} \posfml \mid \dbox{\negprog} \posfml
  &
    \text{\textcolor{annotgrey}{\strictfml \rdL}}
    \\
    \negfml,\negfmlb & ::= \term \geq \termb \mid \negfml\lor\negfmlb \mid \negfml\land\negfmlb \mid \lexists{\ivar} \negfml  \mid \lforall{\ivar} \negfml \mid \ddiamond{\negprog} \negfml \mid \dbox{\posprog} \negfml
    &
      \text{\textcolor{annotgrey}{\laxfml \rdL}}
  \end{align*}
The box and diamond modalities of \rdL formulas are again described by two different classes of hybrid programs.
Corresponding to the difference between \strictfml and \laxfml formulas, there are \emph{\exactprg} hybrid programs \(\posprog\) and \emph{\approxprog} hybrid programs \(\negprog\) defined by
\begin{align*}
  \posprog,\posprogb & ::= \humod{\ivar}{\term} \mid \ptest{\posfml} \mid \pchoice{\posprog}{\posprogb} \mid \posprog;\posprogb \mid \prepeat{\posprog} \mid \pevolvein{\synodevec{\ivar}{\term}}{\posfml}
  &
  \text{\textcolor{annotgrey}{\strictfml \rdL program}}
  \\
  \negprog,\negprogb & ::= \humod{\ivar}{\term} \mid \ptest{\negfml} \mid \pchoice{\negprog}{\negprogb} \mid \negprog;\negprogb \mid \prepeat{\negprog} \mid \pevolvein{\synodevec{\ivar}{\term}}{\negfml}
  &
      \text{\textcolor{annotgrey}{\approxprog \rdL program}}
\end{align*}
The difference between \exactprg and \approxprog programs is that tests and evolution domains are \strictfml formulas in \exactprg programs and \laxfml formulas in \approxprog programs, respectively.
Note that in a \strictfml formula the programs appearing in a diamond modality need to be \exactprg programs, while the programs in a box modality need to be \approxprog programs.
Dually, the diamond programs in a \laxfml formula are \approxprog programs, while the box programs in a \laxfml formula are \exactprg programs.

By design, robust differential dynamic logic consists of two \emph{fragments} of differential dynamic logic, since \(\posfml\) and \(\negfml\) \emph{are} formulas of differential dynamic logic.
The circle annotations are merely for emphasis, indicating that a formula belongs to the \strictfml (open \(\strictsymb{}\)) or the \laxfml (closed \(\laxsymb{}\)) fragment.
In fact, the \strictfml (\laxfml) formulas of \rdL correspond exactly to the fragment of \dL in which strict (weak) inequalities appear only in positions of positive polarity and weak (strict) inequalities appear only in positions of negative polarity.
\iflongversion
\else
  (For a full definition of polarity in \dL, see \citelongversion{sec:apppolar}.)
\fi

From a modeling and specification perspective, therefore, little practical expressiveness is lost, since the distinction between \(\ivar\geq0\) and \(\ivar>0\) is not of practical significance.
For example, hybrid systems controllers with equality constraints \(\ptest{\ivar=\term}\) may not appear in \exactprg programs.
This restriction is reasonable, as a controller cannot \emph{exactly verify} that the value of the real variable~\(\ivar\) is~\(0\), which would require checking infinitely many decimal places of \(\ivar\) \cite{DBLP:series/txtcs/Weihrauch00}.
Instead, a controller model may, more realistically, test \(\ptest{(\term-0.1<\ivar<\term+0.1)}\).

\subsubsection{Negation of Robust Formulas.}
Although negation is not a syntactic primitive of \rdL, it bridges the strict and the non-strict fragments:
The negation \(\lnotp{\posfmlb}\) of a \strictfml formula is a \laxfml formula and conversely the negation \(\lnotp{\negfml}\) of a \laxfml formula is a \strictfml formula.
For instance, the implication \(\negfml\limply\posfmlb\) is a \strictfml formula \(\lnotp{\negfml}\lor\posfmlb\) when \(\negfml\) is a \laxfml formula and \(\posfmlb\) is a strict formula.
Dually, the implication \(\posfmlb\limply\negfml\) is a \laxfml formula \(\lnotp{\posfmlb}\lor\negfml\).

\subsubsection{Robust Negation.}

As noted, the negation of a strict robust formula is not a strict robust formula.
\emph{Robust negation} is a slightly stronger notion defined by
\begin{align*}
   & \robnotp{\term>\termb} \equiv \termb>\term
   &                                            &
  \robnotp{\term\geq\termb} \equiv \termb\geq\term
\end{align*}
and extended to formulas homomorphically.
The robust negation \(\robnot{\posfml}\) of a \strictfml (\laxfml) \rdL formula \(\posfml\) is a \strictfml (\laxfml) \rdL formula.

Robust negation of \rdL formulas is a strengthening of negation that does not satisfy the law of the excluded middle.
That is, the robust negation \(\robnot{\posfml}\) entails the standard negation \(\lnot{\posfml}\), but not conversely.
A strict \rdL formula may be viewed as having one of three truth values in a state: \(\fml\) can be \emph{robustly true}, if it is semantically satisfied, \emph{robustly false} if \(\robnot{\fml}\) is satisfied and \emph{undetermined} if neither is the case.
For example, the formula \(\ivar>0\) is robustly true in a state with \(\state(\ivar)=1\) and robustly false in a state with \(\state(\ivar)=-1\).
However, in a state with \(\state(\ivar)=0\), the truth value of \(\ivar>0\) is not \emph{robustly determined}.
A \rdL formula~\(\fml\) is said to be \emph{\wellposed} if \(\fml\lor\robnot{\fml}\) is valid.

\subsubsection{Bounded Quantification.}
Bounded quantifiers are important in the definition of the reachability fragment of \rdL.
For a \strictfml formula \(\posfml\) and a term~\(\term\), which does not mention \(\ivar\), the formula \(\lexists{\ivar{>}\term}\posfml\) stands for the \strictfml formula \(\lexists{\ivar}(\ivar>\term \land \posfml)\).
Dually, \(\lforall{{\ivar{\geq}\term}}\posfml\) abbreviates the \strictfml formula \(\lforall{\ivar}(\ivar\geq\term\limply\posfml)\).
To exclude ambiguous notation such as \(\lexists{\ivar{>}\ivar}\posfml\), it is important that the term \(\term\) may not mention \(\ivar\).
For a closed interval \(\syninterv=[\term,\termb]\) of terms \(\term,\termb\) not mentioning~\(\ivar\), the \strictfml formula \(\lforallbdd[\syninterv]{\ivar}\posfml\) and the \laxfml formula \(\lexistsbdd[\syninterv]{\ivar}\negfml\) are defined by
\[
  \lforallbdd[\syninterv]{\ivar}\posfml\equiv\lforall{\ivar}((\term\leq\ivar\land \ivar\leq\termb)\limply\posfml)
  \qquad
  \lexistsbdd[\syninterv]{\ivar}\negfml\equiv\lexists{\ivar}(\term\leq\ivar\land\ivar\leq\termb \land\negfml)
\]
The \strictfml formula \(\lexistsbdd[\syninterv]{\ivar}\posfml\) and the \laxfml formula \(\lforallbdd[\syninterv]{\ivar}\negfml\) are defined analogously for the open interval \(\syninterv=(\term,\termb)\).

\subsection{Robust Reachability Fragment}\label{sec:robustreachdl}

Hybrid systems reachability properties are expressed using \emph{diamond modalities}.
Robust reachability differential dynamic logic \rrdL is a fragment of \rdL that contains arbitrary robust diamond modalities.
This makes it expressively powerful and at the same time tame enough to be complete.

\subsubsection{\Strictfml Robust Reachability.}
Syntactically, the \emph{\strictfml robust reachability} formulas and programs of robust reachability differential dynamic logic \rrdL are:
\begin{align*}
  \posrfml,\posrfmlb  & ::= \term >\termb \mid \posrfml\lor\posrfmlb \mid \posrfml\land\posrfmlb \mid \lexists{\ivar}\posrfml \mid \lforallbdd[\highlight{\syninterv}]{\ivar} \posrfml \mid \ddiamond{\posrprog} \posrfml \mid \dbox{\negrprog} \posrfml
  &
  \text{\textcolor{annotgrey}{\strictfml \rrdL}}
  \\
  \posrprog,\posrprogb & ::= \humod{\ivar}{\term} \mid \ptest{\posrfml} \mid \pchoice{\posrprog}{\posrprogb} \mid \posrprog;\posrprogb \mid \prepeat{\posrprog} \mid \pevolvein{\synodevec{\ivar}{\term}}{\posrfml}
  &
  \text{\textcolor{annotgrey}{\strictfml \rrdL}}
\end{align*}
where \(\syninterv=[\term,\termb]\) is an interval of terms, which do not mention \(\ivar\), and \(\negrprog\) is a \laxfml robust reachability program (see below).
Importantly, \strictfml robust reachability formulas capture essentially \emph{all} physically relevant reachability (diamond) properties, including unbounded continuous evolution and loop iteration.
The restriction that universal quantifiers are bounded means only that a \rrdL formula cannot \emph{assume} a variable to have \emph{any} arbitrary value.
For specification, it should always be possible to find an appropriate bounding range.
Moreover, \strictfml \rrdL does not restrict the reachability properties, which may feature \emph{unbounded} loops and unbounded continuous evolution.

\subsubsection{\Laxfml Robust Reachability.}
Although the focus of \rrdL is on reachability formulas, which need only \emph{diamond} modalities, the syntax of \strictfml \rrdL formulas admits a restricted set of safety formulas (box modalities) involving \laxfml formulas and programs of \rrdL.
\Laxfml \rrdL formulas do not allow iterations and all continuous evolutions need to be bounded (in space and time).
\begin{align*}
  \negrfml,\negrfmlb   & ::= \term \geq \termb \mid \negrfml\lor\negrfmlb \mid \negrfml\land\negrfmlb \mid \lexistsbdd[\highlight{\syninterv}]{\ivar}  \negrfml  \mid \lforall{\ivar} \negrfml \mid \ddiamond{\negrprog} \negrfml \mid \dbox{\posrprog} \negrfml
  &
  \text{\textcolor{annotgrey}{\laxfml \rrdL}}
  \\
  \negrprog,\negrprogb & ::= \humod{\ivar}{\term} \mid \ptest{\negrfml} \mid \pchoice{\negrprog}{\negrprogb} \mid \negrprog;\negrprogb \mid \pevolvein{\synodevec{\ivar}{\term}}{\negrfml\land \highlight{\syntnorm{\myvect{\ivar}}\leq \evolutionboundterm}}
  &
  \text{\textcolor{annotgrey}{\laxfml \rrdL}}
\end{align*}
The negation of a \strictfml \rrdL formula is a \laxfml \rrdL formula and vice versa.
This can be seen inductively as, for example, the negation \(\ddiamond{\negrprog}\lnot\posrfml\) of the \strictfml formula \(\dbox{\negrprog}\posrfml\) is a \laxfml formula.
The annotations~\(\rstrictsymb\) and~\(\scalebox{1.7}{\rlaxsymb}\) indicate that a formula belongs to one of the \emph{reachability} fragments consisting of \strictfml and \laxfml formulas and programs of \rrdL, respectively.

The definition of \laxfml \rrdL programs ensures that all box modalities in \strictfml \rrdL formulas are iteration-free and have bounded continuous evolution.
To see why this is necessary, observe that loops in box-modalities can encode universal quantifiers over the natural numbers \(\lforall{\ivar\in\naturals}\fml\) via \(\dbox{\humod{\ivar}{0};\prepeat{(\humod{\ivar}{\ivar+1})}}\fml\).
By Matiyasevich's theorem, the validity of such formulas cannot be semi-decidable 
and consequently there can be no complete proof calculus for \strictfml \rrdL formulas, if loops in box modalities are allowed.
For this reason, the restriction to the \strictfml \emph{reachability fragment}, where iteration programs may only appear in diamond formulas, is critical.

The assumption that the evolution of continuous programs in a box modality is bounded \(\syntnorm{\myvect{\ivar}}\leq \evolutionboundterm\) mirrors the boundedness restriction on universal quantifiers.
Both ensure that universal properties are restricted to a \emph{pre-defined range}.
The evolution bound also enforces an upper bound on the duration of the evolution, since the differential equation is assumed to include a time variable \(\tau'=1\).
All the restrictions involved in the robust reachability fragment are, to some extent, necessary to ensure that the fragment is natural and a complete calculus exists.

\subsection{Robustness Semantically}\label{sec:robustnesssem}

The crucial semantic property of \rrdL is that the semantics topologically reflects the robustness property which was enforced syntactically.
Assumptions should be \emph{topologically closed} and conclusions should be \emph{open}.
This is subtle, as it relies on continuity of flows of differential equations and needs to deal \emph{topologically} with infinite unions (existential quantifiers) and infinite intersections (universal quantifiers).
The key is a compactness argument on the \emph{domain of universal quantification}, which relies on a standard topological lemma.
\begin{lemmaE}[Tube Lemma]\label{lem:tubelemma}
  For \(\topspace,\topspaceb\) topological spaces, \(\topspaceb\) compact and \(\openset\subseteq\topspace\times \topspaceb\) open in the product topology, the set \(\{\topspaceelement\in\topspace: \mforall{\topspaceelementb{\in}\topspaceb} (\topspaceelement,\topspaceelementb)\in\openset\}\) is open.
\end{lemmaE}

\begin{proofE}
  Fix some \(\topspaceelement\in\{\topspaceelement\in\topspace: \mforall{\topspaceelementb{\in}\topspaceb} (\topspaceelement,\topspaceelementb)\in\openset\}\).
  For every \(\topspaceelementb\in\topspaceb\) pick open neighborhoods \(\opensetb_{\topspaceelementb},\opensetc_{\topspaceelementb}\) such that \((\topspaceelement,\topspaceelementb)\in \opensetb_{\topspaceelementb}\times \opensetc_{\topspaceelementb}\subseteq \openset\).
  As \(\topspaceb\) is compact, there are \(\topspaceelementb_1,\ldots,\topspaceelementb_n\) such that \(\topspaceb\subseteq\opensetc_{\topspaceelementb_1}\union\ldots\union\opensetc_{\topspaceelementb_n}\).
  Let \(\opensetb=\opensetb_{\topspaceelementb_1}\intersection\ldots\intersection\opensetb_{\topspaceelementb_n}\) and note that \(\opensetb\) is open, \(\topspaceelement\in\opensetb\) and \((\topspaceelement,\topspaceelementb)\in\openset\) for all \(\topspaceelementb\in\topspaceb\).
\end{proofE}
By induction it can be shown that \strictfml reachability formulas are semantically open and \laxfml reachability formulas are closed.
\Cref{lem:tubelemma} ensures that bounded universal quantification preserves these properties.

\begin{theoremE}[Semantic Robustness]\label{thm:semantictopology}
  For \rrdL formulas and programs:
  \begin{enumerate}
    \item \(\semopen{\posrfml}\) is open for \strictfml formulas~\(\posrfml\)\label{it:strictopen}
    \item \(\semclosed{\negrfml}\) is closed for \laxfml formulas~\(\negrfml\)\label{it:laxclosed}
    \item \(\semexactwrp{\posrprog}{\openset}\) is open for \exactprg programs \(\posrprog\) and open sets \(\openset\)\label{it:exopen}
    \item \(\semapproxwrp{\negrprog}{\closedset}\) is closed for \approxprog programs \(\negrprog\) and closed sets \(\closedset\)\label{it:approxclosed}
  \end{enumerate}
\end{theoremE}

\begin{proofE}
  Proceed by induction on the formulas and programs of \rrdL.
  Many cases are immediate, and only the interesting ones are shown.

  \begin{caselist}
    \caseof{\eqref{it:strictopen}}
    For a formula \(\lexists{\ivar}\posrfml\) note that \(\semopen{\lexists{\ivar}\posrfml} = \bigcup_{\ival\in\reals}\{\state:\staterepvarbyval[\state]{\ivar}{\ival}\in\semopen{\posrfml}\}\) is open as a union of open sets.

    Most interesting is the case for a bounded universal quantifier \(\lforallbdd[\syninterv]{\ivar}{\posrfml}\) where \(\syninterv\) is a closed (syntactic) interval with terms \([\term,\termb]\).
    For every state \(\state\), define the continuous function \(\gamma(s)= s \semterm[\state]{\term}+ (1-s)\semterm[\state]{\termb}\), so that
    \[\semopen{\lforallbdd[\syninterv]{\ivar}\posrfml} = \{\state:\mforall{s{\in}[0,1]}\staterepvarbyval[\state]{\ivar}{\gamma(s)}\in\semopen{\posrfml}\}.\]
    This set is open by \Cref{lem:tubelemma}, since \(\{(\state,s) \in\states\times[0,1]: \staterepvarbyval[\state]{\ivar}{\gamma(s)}\in\semopen{\posrfml}\}\) is open with respect to the relative topology on \(\states\times[0,1]\).

    For diamond and box formulas, openness follows from the induction hypotheses \eqref{it:exopen} and \eqref{it:approxclosed}.

    \caseof{\eqref{it:laxclosed}} This is the dual of the previous case.

    \caseof{\eqref{it:exopen}}
    For \(\prepeat{\posrprog}\) note that \(\semexactwrp{\prepeat{\posrprog}}{\openset}=\bigcup_{n\geq 0}\semexactrepeatp{\posrprog}{\openset}{n}\) is open as a union of open sets.

    For a continuous program \(\pevolvein{\synodevec{\ivar}{\term}}{\posrfml}\), as in the definition of the semantics, let \(\syntvectorfield{\synodevec{\ivar}{\term}}{\state}:\reals^n\to\reals^n\) be defined by \(\syntvectorfield{\synodevec{\ivar}{\term}}{\state}(\myvect{\ival}) = \staterestr[{\semterm[{\staterepvarsbyvals[\state]{\myvect{\ivar}}{\myvect{\ival}}}]{\myvect{\term}}}]{\myvect{\ivar}}\) and set \(\stateevolve{\state}{\mtime}=\staterepvarsbyvals[\state]{\myvect{\ivar}}{\odeflow{\syntvectorfield{\synodevec{\ivar}{\term}}{\state}}{\staterestr[\state]{\myvect{\ivar}}}{\mtime}}\).
    For continuous programs define the set of states reaching \(\openset\) in time \(\mtime\) as
    \[R_t(\openset) = \{\state: \stateevolve{\state}{\mtime}\in\openset\}\intersection\{\state: \mforall{\mtimeb{\in}[0,\mtime]} \stateevolve{\state}{\mtimeb}\in\semopen{\posrfml}\},\]
    so that
    \(\semexactwrp{\pevolvein{\synodevec{\ivar}{\term}}{\posrfml}}{\openset} = \bigcup_{\mtime\geq 0}R_t(\openset)\).
    Hence, it suffices to show that every \(R_t(\openset)\) is open.
    As the flow \(\odeflowfunc{\syntvectorfield{\synodevec{\ivar}{\term}}{\state}}\) is continuous in \(\state\), so is \(\stateevolvefunc{\state}\), and hence \(\{\state: \stateevolve{\state}{\mtime}\in\openset\}\) and \(\{(\state,\mtime) : \stateevolve{\state}{\mtime}\in\semopen{\posrfml}\}\) are open, since \(\openset\) and \(\semopen{\posrfml}\) are open by assumption and induction hypothesis, respectively.
    Moreover, by \Cref{lem:tubelemma} the set \(\{\state: \mforall{\mtimeb{\in}[0,\mtime]} \odeflow{\ivar}{\state}{\mtimeb}\in\semopen{\posrfml}\}\) is open.

    \caseof{\eqref{it:approxclosed}} The only case of interest is for continuous programs \(\pevolvein{\synodevec{\ivar}{\term}}{\negrfml\land \syntnorm{\myvect{\ivar}}\leq \evolutionboundterm}\).
    Define the functions \(\syntvectorfield{\synodevec{\ivar}{\term}}{\state}\) and \(\stateevolvefunc{\state}\) as before.
    Since \(\evolutionboundterm\) does not mention any variables from \(\myvect{\ivar}\), the evolution \(\evolutionboundtermsem=\semterm[\state]{\evolutionboundterm}\) is constant, i.e., \(\semterm[\stateevolvefunc{\state}]{t}(\evolutionboundterm)=\evolutionboundtermsem\) for all~\(t\).
    Similar to the previous case, for a closed set \(\closedset\) define
    \[R_t(\closedset) = \{\state: \stateevolve{\state}{\mtime}\in\closedset\}\intersection\{\state: \mforall{\mtimeb\in[0,\mtime]} \stateevolve{\state}{\mtimeb}\in\semclosed{\negrfml}\mand \norm[\infty]{\stateevolve{\state}{\mtimeb}}\leq \evolutionboundtermsem\}\]
    and again every \(R_t(\closedset)\) is closed by continuity of \(\stateevolvefunc{\state}\).
    Since the differential equation contains \(\timevar'=1\), the potential evolutions are time-bounded by \(\evolutionboundtermsem\), so that  \(\semapproxwrp{\pevolvein{\synodevec{\ivar}{\term}}{\negrfml\land \syntnorm{\myvect{\ivar}}\leq \evolutionboundterm}}{\closedset} = \{\state: \mexists{t{\in} [0,\evolutionboundtermsem]} \state\in R_t(\closedset)\}\).
    By \Cref{lem:tubelemma} this set is closed. 
  \end{caselist}
\end{proofE}
The defining restriction of \rrdL formulas that loops may not appear in box modalities is crucial for \eqref{it:approxclosed}.
The proof of \eqref{it:approxclosed} uses only time-boundedness.
Instead of assuming \(\syntnorm{\myvect{\ivar}}\leq\evolutionboundterm\) in the evolution domain, it suffices to assume \(\timevar\leq\evolutionboundterm\).
The proof goes through when allowing \approxprog programs \(\pevolvein{\synodevec{\ivar}{\term}}{\negrfml\land \timevar\leq \evolutionboundterm}\).

\section{Axioms and Rules of \dL}
\label{sec:calculus}

The power of differential dynamic logic comes from its proof calculus, which is relatively complete~\cite{DBLP:conf/lics/Platzer12a} and can prove all invariants and all algebraic properties of algebraic hybrid systems~\cite{DBLP:journals/jacm/PlatzerT20}.
Moreover, all true open properties of compact initial value problems are provable~\cite{DBLP:journals/jacm/PlatzerQ25}.
In \Cref{sec:completeness}, completeness for general robust reachability properties will be shown.
The proof calculus for \dL can be found in the literature~\cite{DBLP:conf/lics/Platzer12a,DBLP:journals/jacm/PlatzerT20}.
The relevant (derived) rules and axioms for the completeness proof are recalled.
Soundness for robust \dL is inherited from soundness for \dL, since \rdL is a syntactic fragment and their semantics coincide.

\subsection{Basic \dL Calculus}

The \dL calculus is a generalization of classical first-order sequent calculus and the calculus of first-order dynamic logic.
A sequent \(\lsequent{\fmlset}{\fmlsetb}\) consists of two (unordered) sets \(\fmlset,\fmlsetb\) of formulas.
As usual, the left-hand side of the sequent is read conjunctively and the right-hand side disjunctively.
The usual sequent comma notation, as in \(\fml,\fmlset_1,\fmlset_2=\{\fml\}\cup\fmlset_1\cup\fmlset_2\), is used.

\subsubsection{Real Arithmetic Foundation.}
The foundational level of differential dynamic logic is the theory of real arithmetic.
Because it is decidable \cite{Tarski10.1007/978-3-7091-9459-1_3}, real arithmetic is simply taken as the basic rule on which the proof calculus is built.
The rule

\begin{calculus}
  \cinferenceRule[qear|\usebox{\Rval}]{real arithmetic proof rule}
  {
    \linferenceRule[sequent]
    {\lclose}
    {\lsequent{\fmlsetb} {\fmlset}}\quad
  }{$\text{if}~\textstyle\landfold_{\fmlb\in\fmlsetb}\fmlb \limply \lorfold_{\fml\in\fmlsetb}\fml~\text{is valid in \reals}$}
\end{calculus}

\noindent
discharges any sequent in a proof tree that is semantically valid in \(\reals\), i.e., a tautology of real arithmetic.
The interesting and difficult parts of hybrid systems verification, therefore, arise from hybrid programs, in particular \emph{unbounded iteration} and \emph{continuous evolution}.
The remaining real arithmetic problems can be handled algorithmically by real arithmetic SMT solving \cite{cad10.1007/3-540-07407-4_17}.

\subsubsection{Basic Logical Proof Rules.}
The basic logical rules for the standard connectives are the usual sequent calculus rules.
Only some of the rules are needed for the completeness proof, and these are recalled here in adapted (derivable) form:

\begin{calculuscollection}
  \vspace{0.5em}
  \begin{calculus}
    \cinferenceRule[orR|$\lor$R]{or right proof rule}
    {
      \linferenceRule[sequent]
      {\lsequent{\fmlsetb} {\fml,\fmlb,\fmlset}}
      {\lsequent{\fmlsetb} {\fml\lor \fmlb,\fmlset}}
    }{}
    \cinferenceRule[existsR|$\exists$R]{exists right rule}
    {
      \linferenceRule[sequent]
      {\lsequent{\fmlsetb} {\ddiamond{\humod{\ivar}{\term}}\fml,\fmlset}}
      {\lsequent{\fmlsetb} {\lexists{\ivar}\fml,\fmlset}}\quad
    }{}
    \cinferenceRule[C|C]{equivalence rule}
    {
      \linferenceRule[sequent]
      {\lsequent{} {\fml\lbisubjunct\fmlb}
        &\lsequent{\fmlsetb} {\fmlb,\fmlset}}
      {\lsequent{\fmlsetb} {\fml,\fmlset}}
    }{}
  \end{calculus}
  \qquad\quad
  \begin{calculus}
    \cinferenceRule[andR|$\land$R]{and right proof rule}
    {
      \linferenceRule[sequent]
      {\lsequent{\fmlsetb} {\fml,\fmlset}
        &\lsequent{\fmlsetb} {\fmlb,\fmlset}
      }
      {\lsequent{\fmlsetb} {\fml\land \fmlb,\fmlset}}
    }{}
    \cinferenceRule[forallR|$\forall$R]{for all right rule}
    {
      \linferenceRule[sequent]
      {\lsequent{\fmlsetb,\ivar\in \syninterv} {\fml,\fmlset}}
      {\lsequent{\fmlsetb} {{}\lforallbdd[\syninterv]{\ivar}\fml,\fmlset}}\quad
    }{$\ivar\notin\freevars{\fmlsetb,\fmlset}$}
  \end{calculus}
  \vspace{0.5em}
\end{calculuscollection}

\noindent
The rules \irref{orR}, \irref{andR} and \irref{forallR} are standard.
The existence axiom \irref{existsR} is usually formulated in terms of substitution of \(\term\) for \(\ivar\) in~\(\fml\).
The formulation in terms of the assignment modality is used only for the convenience of concentrating all subtleties related to substitution in the single case of assignment programs.
Recall that \(\ivar\) does not appear in the bounds of the interval \(\syninterv\) in axiom \ref{forallR}.
To apply \irref{forallR} in a context where \(\ivar\) is free in \(\fmlset\) or \(\fmlsetb\), it is always possible to \emph{uniformly rename all} occurrences of \(\ivar\) in \(\fml\) before applying \irref{forallR}.
Rule \irref{C} is a simplified form of the cut rule that is used to instantiate the hybrid systems axioms below.

\subsubsection{Assignment Programs and Substitutions.}
Assignment hybrid programs are usually axiomatized by an axiom of the form \(\ddiamond{\humod{\ivar}{\term}}\fml\lbisubjunct \fmlreplacevarby[\fml]{\ivar}{\term}\), where \(\fmlreplacevarby[\fml]{\ivar}{\term}\) denotes the substitution of \(\term\) for \(\ivar\) in \(\fml\).
However, this causes unnecessary difficulties with free variable capture and inadmissible substitutions.
When \(\ivar\) is both free and bound, as for example in a repetition program \(\prepeat{(\humod{\ivar}{\ivar+1})}\), substitution is subtle.
To simplify this, an alternative axiomatization based on the equivalence
\[\ddiamond{\humod{\ivar}{\term}}\fml\lbisubjunct\lforall{\ivarc}(\ivarc=\ivar\limply\lforall{\ivar} (\ivar=\termreplacevarby[\term]{\ivar}{\ivarc}\limply \fml))\]
is used, where \(\ivarc\) is suitably fresh and \(\termreplacevarby[\term]{\ivar}{\ivarc}\) is the term \(\term\) with \(\ivar\) replaced uniformly by \(\ivarc\).
This avoids substitution into formulas by universally quantifying over \(\ivar\) and ensuring that it takes the correct value.
In case \(\term\) mentions \(\ivar\), the value of~\(\ivar\) is retained in the scope of the universal quantifier \(\lforall{\ivar}\) by means of the fresh variable \(\ivarc\).
For convenience, write \(\fmlsafereplacevarby[\fml]{\ivar}{\term}\equiv\lforall{\ivarc}(\ivarc=\ivar\limply\lforall{\ivar} (\ivar=\termreplacevarby[\term]{\ivar}{\ivarc}\limply \fml))\).

\subsubsection{Hybrid Program Axioms.}
In order to handle hybrid programs, \dL axiomatizes the hybrid program connectives exactly as dynamic logic:

\begin{calculuscollection}
  \vspace{0.5em}
  \begin{calculus}
    \cinferenceRule[diaassignequal|$\langle:=\rangle$]{safe diamond assignment axiom}
  {\ddiamond{\humod{\ivar}{\term}}\fml\lbisubjunct\fmlsafereplacevarby[\fml]{\ivar}{\term}\quad}
  {$\ivarc\notin\freevars{\term,\fml,\ivar}$}
  \cinferenceRule[diachoice|$\langle\cup\rangle$]{diamond choice axiom}
  {\ddiamond{\pchoice{\prog}{\progb}}\fml\lbisubjunct \ddiamond{\prog}\fml\lor\ddiamond{\progb}\fml}
  {}
  \cinferenceRule[diarepeat|$\langle*\rangle$]{diamond iteration axiom}
    {\ddiamond{\prepeat{\prog}}\fml\lbisubjunct \fml\lor\ddiamond{\prog}\ddiamond{\prepeat{\prog}}\fml}
    {}
\end{calculus}
\qquad
\begin{calculus}
  \cinferenceRule[diatest|$\langle?\rangle$]{diamond test axiom}
  {\ddiamond{\ptest{\fmlb}}\fml\lbisubjunct (\fmlb\land\fml)}
  {}
    \cinferenceRule[diacompose|$\langle{;}\rangle$]{box composition axiom}
    {\ddiamond{{\prog};{\progb}}\fml\lbisubjunct \ddiamond{\prog}\ddiamond{\progb}\fml}
    {}
    \cinferenceRule[starfinite|$\langle {}^{\leq n}\rangle$]{star finite rule}
    {
      \linferenceRule[sequent]
      {\lsequent{\fmlsetb} {\ddiamond{\piter{\prog}{n}}\fml,\fmlset}}
      {\lsequent{\fmlsetb} {\ddiamond{\prepeat{\prog}}\fml,\fmlset}}
    }{}
  \end{calculus}
\end{calculuscollection}

\noindent
Axiom \irref{diaassignequal} is the \emph{substitution-safe} version of the usual assignment axiom and axioms \irref{diatest}, \irref{diachoice} and \irref{diacompose} reduce program operations to logical connectives equivalently.
Axiom \irref{diarepeat} characterizes loop reachability and the derivable rule \irref{starfinite} says that in case \(\fml\) can be reached in at most \(n\)-steps, then an iteration of some length can reach \(\fml\).

\subsection{Continuous Program Axioms}
Continuous programs require a few additional axioms, such as \(\mathrm{dI}\) for differential invariance reasoning \cite{DBLP:journals/jacm/PlatzerT20}, differential cut \(\mathrm{dC}\) to accumulate information, \(\Delta\) for relative completeness \cite{DBLP:conf/lics/Platzer12a} and real induction axioms \cite{DBLP:journals/jacm/PlatzerT20} to globalize local arguments.
This section presents the axioms used in the \rrdL completeness proof.

\subsubsection{Continuous Safety-Reachability Duality.}
\label{sec:boxdiamondodeduality}
An important insight for the completeness proof is a robust duality between continuous reachability and continuous safety.
\emph{Robust} reachability and safety (box) properties of continuous programs are \emph{inter-translatable}.
Intuitively, an evolution is safe for bounded time and within a bounded domain if and only if it reaches the time bound or leaves the domain before leaving the safe region.

\begin{theoremE}[ODE Duality Soundness]
  The ODE duality axiom is sound
  \[
    \cinferenceRule[odeboxdiamond|${[}$ODE$\rangle$]{ODE box diamond axiom}
    { 
      \linferenceRule[viuqe]
      {
        \dbox{\pevolvein{\synodevec{\ivar}{\term}}{\negrfmlb\land \syntnorm{\myvect{\ivar}}\leq \evolutionboundterm}}\posrfml
      }{
        \ddiamond{\pevolvein{\synodevec{\ivar}{\term}}{\posrfml\lor \syntnorm{\myvect{\ivar}}> \evolutionboundterm}}(\lnot\negrfmlb\lor \syntnorm{\myvect{\ivar}}> \evolutionboundterm)}
    }{}
  \]
  for a \strictfml \rrdL formula \(\posrfml\), a \laxfml \rrdL formula \(\negrfmlb\) and a term \(\evolutionboundterm\) not mentioning variables from \(\myvect{\ivar}\).
\end{theoremE}

\begin{proofE}
  As in the definition of the semantics for continuous programs let \(\syntvectorfield{\synodevec{\ivar}{\term}}{\state}:\reals^n\to\reals^n\) be the vector field defined by \(\syntvectorfield{\synodevec{\ivar}{\term}}{\state}(\myvect{\ival}) = \staterestr[{\semterm[{\staterepvarsbyvals[\state]{\myvect{\ivar}}{\myvect{\ival}}}]{\myvect{\term}}}]{\myvect{\ivar}}\) and define the state-flow map \(\stateevolve{\state}{\mtime}=\staterepvarsbyvals[\state]{\myvect{\ivar}}{\odeflow{\syntvectorfield{\synodevec{\ivar}{\term}}{\state}}{\staterestr[\state]{\myvect{\ivar}}}{\mtime}}\).
  Now consider a state \(\state\in \sem{\lforall{\myvect{\ivar}}(\negrfmlb\limply\syntnorm{\myvect{\ivar}}\leq\evolutionboundterm)}\) and let \(\evolutionboundtermsem=\semterm[\state]{\evolutionboundterm}\).
  Clearly \(\semterm[\stateevolve{\syntvectorfield{\synodevec{\ivar}{\term}}{\state}}{\mtime}]{\evolutionboundterm}=\evolutionboundtermsem\) for all \(\mtime\) in the domain.

  For soundness of the \(\rightarrow\) implication, assume \(\state\in\sem{\dbox{\pevolvein{\synodevec{\ivar}{\term}}{\negrfmlb\land  \syntnorm{\myvect{\ivar}}\leq \evolutionboundterm}}\posrfml}\) and let
  \[\mtime=\inf\{\mtimeb\leq \evolutionboundtermsem: (\staterestr{\myvect{\ivar}},s)\notin\flowdomain{\syntvectorfield{\synodevec{\ivar}{\term}}{\state}}\text{ or }\stateevolve{\state}{\mtimeb} \notin\semclosed{\negrfmlb\land \syntnorm{\myvect{\ivar}}\leq \evolutionboundterm}\}.\]
  There are two cases: either \(\mtime=\infty\) or \(\mtime\leq\evolutionboundtermsem\).

  In case \(\mtime=\infty\), note that
  \((\state,\evolutionboundtermsem)\in \flowdomain{\syntvectorfield{\synodevec{\ivar}{\term}}{\state}}\) and \(\stateevolve{\state}{[0,\evolutionboundtermsem]}\in\semclosed{\negrfmlb\land\syntnorm{\myvect{\ivar}}\leq\evolutionboundterm}\).
  Thus, by assumption \(\stateevolve{\state}{[0,\evolutionboundtermsem]}\in\semopen{\posrfml}\).
  Since the interval of solution existence is open, there is some \(\mtimeb>\evolutionboundtermsem\) such that \((\state,\mtimeb)\in \flowdomain{\syntvectorfield{\synodevec{\ivar}{\term}}{\state}}\).
  Since the differential equation includes \(\timevar'=1\) then also \(\stateevolve{\state}{[0,\mtimeb]}\subseteq {\semopen{\posrfml\lor\syntnorm{\myvect{\ivar}}>\evolutionboundterm}}\) and \(\stateevolve{\state}{\mtimeb}\in \semopen{\syntnorm{\myvect{\ivar}}>\evolutionboundterm}\).
  Hence \(\state\in\semopen{\ddiamond{\pevolvein{\synodevec{\ivar}{\term}}{\posrfml\lor \syntnorm{\myvect{\ivar}}>\evolutionboundterm}}(\lnot\negrfmlb\lor \syntnorm{\myvect{\ivar}}>\evolutionboundterm)}\).

  In case \(\mtime\leq \evolutionboundtermsem\) note that \(\stateevolve{\state}{[0,\mtime)}\in\semclosed{\negrfmlb\land\syntnorm{\myvect{\ivar}}\leq\evolutionboundterm}\). 
  Note that \(\mtime\) is in the existence interval, as otherwise the solution would have escaped \(\syntnorm{\myvect{\ivar}}\leq\evolutionboundterm\) before then.
  By continuity of \(\stateevolvefunc{\state}\) as \(\semclosed{\negrfml\land\syntnorm{\myvect{\ivar}}\leq\evolutionboundterm}\) is closed by \Cref{thm:semantictopology} also \(\stateevolve{\state}{[0,\mtime]}\in\semclosed{\negrfmlb\land\syntnorm{\myvect{\ivar}}\leq\evolutionboundterm}\).
  So by the assumption \(\stateevolve{\state}{[0,\mtime]}\in\semopen{\posrfml}\).
  By continuity of the flow and openness of \(\semopen{\posrfml}\) (\Cref{thm:semantictopology}), there is some \(\mtimeb>\mtime\) such that \(\stateevolve{\stateb}{[0,\mtimeb]}\in\semopen{\posrfml}\).
  By definition of \(\mtime\) without loss of generality \(\stateevolve{\state}{\mtimeb}\notin\semclosed{\negrfmlb\land\syntnorm{\myvect{\ivar}}\leq\evolutionboundterm}\).
  Thus \(\state\in\semopen{\ddiamond{\pevolvein{\synodevec{\ivar}{\term}}{\posrfml\lor \syntnorm{\myvect{\ivar}}> \evolutionboundterm}}(\lnot\negrfmlb\lor \syntnorm{\myvect{\ivar}}> \evolutionboundterm)}\).

  For soundness of the \(\leftarrow\) implication, assume \[\state\in\semopen{\ddiamond{\pevolvein{\synodevec{\ivar}{\term}}{\posrfml\lor \syntnorm{\myvect{\ivar}}> \evolutionboundterm}}(\lnot\negrfmlb\lor \syntnorm{\myvect{\ivar}}> \evolutionboundterm )}\]
  Then there is \(\mtimeb\) such that \(\stateevolve{\state}{[0,\min\{\evolutionboundtermsem,\mtimeb\}]}\subseteq \semopen{\posrfml}\) and either \(\norm{\stateevolve{\state}{\mtimeb}}>\evolutionboundtermsem\) or \(\stateevolve{\state}{\mtimeb}\notin\semclosed{\negrfmlb}\).
  It follows immediately that \(\state\in\semopen{\dbox{\pevolvein{\synodevec{\ivar}{\term}}{\negrfmlb\land \syntnorm{\myvect{\ivar}}\leq \evolutionboundterm}}\posrfml}\).
\end{proofE}
The duality axiom is particular to \rrdL.
As it relies on robustness, it does not hold for \dL formulas, as the following incorrect instance with the \strictfml evolution domain formula \(\fmlb\equiv\ivar<0\) demonstrates:
\[\dbox{\pevolvein{\synodevec{\ivar}{\term}}{x<0\land \syntnorm{\myvect{\ivar}}\leq 1}}x<0\lbisubjunct
  \ddiamond{\pevolvein{\synodevec{\ivar}{\term}}{x<0\lor \syntnorm{\myvect{\ivar}}> 1}}(x\geq 0\lor \syntnorm{\myvect{\ivar}}> 1)
\]
The topological properties of the semantics of \rrdL formulas prevent such gaps.

Soundness of the axiom \irref{odeboxdiamond} crucially relies on the space-boundedness and the time-boundedness of the differential equation.
Time-boundedness is ensured by \(\syntnorm{\myvect{\ivar}}\leq \evolutionboundterm\) as a time component \(\tau'=1\) is always included in \(\pevolve{\synodevec{\ivar}{\term}}\).
The space-boundedness is important, as it makes it possible to handle solutions which exist only for finite time and therefore leave the bounded domain.

\subsubsection{Differential Equations via Iterations.}

\newcommand{\iterationprog}{\ptest{(\syntnorm{\myvect{\ivar}}< \evolutionboundterm-\varepsilon)};\humod{\myvect{\ivar}}{\myvect{\ivar}+h\myvect{\term}};\humod{\varepsilon}{(1+h\lipsch)\varepsilon+\tfrac{\lipsch\fbdd}{2}h^2}}

An important ingredient to handle the continuous parts of hybrid systems is a reduction of the continuous dynamics to discrete dynamics by means of an Euler approximation similar to axiom~\sgandalfsymb \cite{DBLP:conf/lics/Platzer12a}.
This is possible because robustness ensures that if a state satisfying \(\fml\) is reachable, then there is some margin of error \(\varepsilon>0\) and a reachable state \(\state\), such that \(\fml\) is true everywhere in the \(\varepsilon\)-neighborhood for \(\state\).
Thus, continuous reachability can be rephrased by reachability along a discrete Euler approximation of error at most \(\varepsilon\).
This \emph{quantitative robustness} is syntactically described by the \emph{\(\varepsilon\)-interior} \(\epsint[\varepsilon]{\fml}\) of a \strictfml formula \(\fml\):
\[\epsint[\varepsilon]{\fml}\equiv
  \lforallbdd[{[\ivar_1-\varepsilon,\ivar_1+\varepsilon]}]{\ivarb_1} \ldots   \lforallbdd[{[\ivar_n-\varepsilon,\ivar_n+\varepsilon]}]{\ivarb_n} \fmlreplacevarby[\fml]{\myvect{\ivar}}{\myvect{\ivarb}}
\]
which requires \(\fml\) to be true in a neighborhood of \(\myvect{\ivar}\).

To symbolically bound the error of the Euler approximation of a differential equation \(x'=f(x)\) on a compact set, bounds on \(f\) and the Jacobian \(Df\), respectively, are required.
For a syntactic differential equation \(\synodevec{\ivar}{\term}\)
let \(\tfrac{\partial\term_i}{\partial\ivar_j}\) denote the formal partial derivative of the polynomial term \(\term_i\) with respect to variable \(\ivar_j\).
The formula
\(\secondderbound\equiv\lforall{\syntnorm{\myvect{\ivar}}{\leq} \evolutionboundterm+2} (\syntnorm{\term}<\fbdd\land\textstyle\bigwedge_{i,j}\syntnorm{\tfrac{\partial\term_i}{\partial\ivar_j}}<\lipsch)\)
ensures that the syntactic vector field is bounded by \(\fbdd\) and its Jacobian is bounded by \(\lipsch\) in the \(\infty\)-norm on the set defined by \(\syntnorm{\myvect{x}}\leq \evolutionboundterm+2\).

The Euler-step program \(\iterationprogabb\equiv\iterationprog\) describes a single iteration step of an Euler approximation with step size \(h\) and accumulates the standard local error bound in the variable \(\varepsilon\).
The program ensures that the approximation remains in the compact domain \(\syntnorm{\myvect{\ivar}}<  \evolutionboundterm-\varepsilon\).

\begin{theoremE}[Euler Axiom Soundness]
  The Euler axiom \irref{diaode} is sound

  \begin{calculuscollection}
    \begin{calculus}
      \cinferenceRule[diaode|$\langle{E}\rangle$]{diamond ode axiom}
      {\linferenceRule[equivl]
        {
          \lexists{\evolutionboundterm,\fbdd,\lipsch{>}0}(\secondderbound\land\lexists{h{>}0}\ddiamond{\humod{\varepsilon}{0};\prepeat{(\ptest{\epsint[\varepsilon]{\posrfmlb}};\iterationprogabb)}}(\epsint[\varepsilon]{\posrfml}\land \varepsilon <1))
        }
        {
          \ddiamond{\pevolvein{\synodevec{\ivar}{\term}}{\posrfmlb}}\posrfml
        }
      }
      {}
    \end{calculus}\\
  \end{calculuscollection}
  where \(\posrfml \text{ is a \strictfml}\) \rrdL formula and \(h,\evolutionboundterm,\fbdd,\lipsch,\varepsilon\notin\freevars{\posrfml}\cup\{\myvect{\ivar},\myvect{\term}\}\) are fresh.
\end{theoremE}

\begin{proofE}
  By \irref{diaodebound} it suffices to show equivalently
  \begin{align*}
                      & {
        \ddiamond{\pevolvein{\synodevec{\ivar}{\term}}{\posrfmlb\land\syntnorm{\myvect{\ivar}}{<}\evolutionboundterm}}\posrfml
    }                     \\
    \lbisubjunct\quad &
    {
        \lexists{\fbdd,\lipsch{>}0}(\secondderbound\land\lexists{h{>}0}\ddiamond{\humod{\varepsilon}{0};\prepeat{(\ptest{\epsint[\varepsilon]{\posrfmlb}};\iterationprogabb)}}(\epsint[\varepsilon]{\posrfml}\land \varepsilon <1))
      }
  \end{align*}

  As in the definition of the semantics of continuous programs, let \(\syntvectorfield{\synodevec{\ivar}{\term}}{\state}:\reals^n\to\reals^n\) be \(\syntvectorfield{\synodevec{\ivar}{\term}}{\state}(\myvect{\ival}) = \staterestr[{\semterm[{\staterepvarsbyvals[\state]{\myvect{\ivar}}{\myvect{\ival}}}]{\myvect{\term}}}]{\myvect{\ivar}}\) and let \(\stateevolve{\state}{\mtime}=\staterepvarsbyvals[\state]{\myvect{\ivar}}{\odeflow{\syntvectorfield{\synodevec{\ivar}{\term}}{\state}}{\staterestr[\state]{\myvect{\ivar}}}{\mtime}}\) be the state-flow.
  Fix a state \(\state\) and note that \(\evolutionboundtermsem{\defeq}\semterm[\state]{\evolutionboundterm}\) is constant along the evolution.

  For the forward implication, suppose there is \(\mtime\) such that \(\stateevolve{\state}{\mtime}\in\semopen{\posrfml}\), \(\stateevolve{\state}{[0,t]}\subseteq\semopen{\posrfmlb}\) and \(\norm[\infty]{\stateevolve{\state}{\mtimeb}}<\evolutionboundtermsem\) for all \(s\in[0,t]\).
  Because \(\semopen{\posrfml}\) and \(\semopen{\posrfmlb}\) are open by \Cref{thm:semantictopology}, there is \(\delta<1\) such that
  \begin{align*}
      \{\staterepvarsbyvals[\state]{\myvect{\ivar}}{\myvect{\ival}} : \norm[\infty]{\myvect{\ival}-\odeflow{\syntvectorfield{\synodevec{\ivar}{\term}}{\state}}{\staterestr[\state]{\myvect{\ivar}}}{\mtime} }<\delta\}&\subseteq\semopen{\posrfml},
    \\
      \{\staterepvarsbyvals[\state]{\myvect{\ivar}}{\myvect{\ival}} : \mtimeb\in[0,\mtime]\mand\norm[\infty]{\myvect{\ival}-\odeflow{\syntvectorfield{\synodevec{\ivar}{\term}}{\state}}{\staterestr[\state]{\myvect{\ivar}}}{\mtimeb} }<\delta\}&\subseteq\semopen{\posrfmlb}.
  \end{align*}
  Then let
  \begin{equation}
    \mfbdd > \sup_{\norm[\infty]{x}\leq k+2}\norm[\infty]{\syntvectorfield{\synodevec{\ivar}{\term}}{\state}(x)}\quad\text{and}\quad\mlipsch> \sup_{\norm[\infty]{x}\leq k+2}\norm[\infty]{D\syntvectorfield{\synodevec{\ivar}{\term}}{\state}(x)},\label{eq:lipschbounds}
  \end{equation}
  so that \(\staterepvarsbyvals[\state]{\fbdd,\lipsch}{\mfbdd,\mlipsch}\in \semopen{\secondderbound}\).
  By the convergence of Euler's method, there is a step size \(H>0\) such that the sum of the local approximation errors (accumulated in \(\varepsilon\)) in time \(\mtime\) is bounded by \(\tfrac{\delta}{2}\).
  The Euler approximation with step size \(H\) then witnesses
  \[\staterepvarsbyvals[\state]{\fbdd,\lipsch,h}{\mfbdd,\mlipsch,H}\in \semopen{\ddiamond{\humod{\varepsilon}{0};\prepeat{(\ptest{\epsint[\varepsilon]{\posrfmlb}};\iterationprogabb)}}(\epsint[\varepsilon]{\posrfml}\land \varepsilon <1)}.\]

  For soundness of the backward implication, suppose there are \(\mfbdd\) and \(\mlipsch\) satisfying \eqref{eq:lipschbounds}.
  Moreover, assume that there is a step size \(h>0\) and a number of iterations \(n\), such that the \(n\)-step Euler iteration has global error bounded by some \(\delta<1\), the \(\delta\)-neighborhood of each Euler step remains in \(\semopen{\posrfmlb}\) and the \(\delta\)-neighborhood of the \(n\)-th Euler step is contained in \(\semopen{\posrfml}\).
  If the solution to the differential equation exists for time \(\mtime=nh\), then convergence of Euler's method ensures that \(\stateevolve{\state}{\mtime}\in\semopen{\posrfml}\), \(\stateevolve{\state}{[0,t]}\subseteq\semopen{\posrfmlb}\) and \(\norm[\infty]{\stateevolve{\state}{\mtimeb}}\leq\evolutionboundtermsem\).

  It remains to show that the solution exists for time \(\mtime=nh\).
  Assume for a contradiction that this is not the case.
  Then there must be \(m<n\) such that the flow leaves the \((\evolutionboundtermsem+1)\)-bounded region by time \(\mtimeb = mh<\mtime\), i.e.
  \(\norm[\infty]{\odeflow{\syntvectorfield{\synodevec{\ivar}{\term}}{\state}}{\staterestr[\state]{\myvect{\ivar}}}{\mtimeb}}>\evolutionboundtermsem+1\).
  Assume without loss of generality \(\norm[\infty]{\odeflow{\syntvectorfield{\synodevec{\ivar}{\term}}{\state}}{\staterestr[\state]{\myvect{\ivar}}}{\mtimeb'}}<\evolutionboundtermsem+2\) for all \(\mtimeb'\in[0,\mtimeb]\).
  As the bounds of \eqref{eq:lipschbounds} apply in the \(\evolutionboundtermsem+2\) bounded region, the first \(m\)-steps of the Euler iteration are within distance~\(\delta\) of the actual solution.
  This is a contradiction, as \(\delta<1\) and the Euler iteration steps remain in the region bounded by \(\evolutionboundtermsem\) by the definition of \(\iterationprogabb\).
\end{proofE}

Observe that in \irref{diaode} both sides of the equivalence are \strictfml formulas.
It is critical for soundness that the postcondition \(\posrfml\) is syntactically a \strictfml formula and, thus, semantically an \emph{open set}.
This allows some (arbitrarily small) wiggle room and sound, symbolic approximations become feasible, since the robust postcondition \(\posrfml\) can be equivalently replaced by a quantitatively robust strengthening \(\epsint[\varepsilon]{\posrfml}\).
For soundness of the forward direction, it suffices to show that the error accumulation correctly bounds the error.
For the converse, the evolution bound \(\evolutionboundterm\) plays a critical role in ensuring solution existence.
Within the bounds, the Euler approximation remains close to the real solution in the region \(\syntnorm{\myvect{\ivar}}<\evolutionboundterm+2\) and does not leave before reaching \(\epsint[\varepsilon]{\posrfml}\) by the iterative assumption.
Hence, the actual solution must remain bounded before reaching \(\epsint[\varepsilon]{\posrfml}\).
In particular, the solution exists at least until it reaches \(\posrfml\).
The Euler axiom \irref{diaode} is closely related to the differential equation characterization axiom \sgandalfsymb \cite{DBLP:conf/lics/Platzer12a}, which does not handle the possibility of divergent evolutions and postconditions with modalities.

To summarize, the proof calculus for \dL and, thus, for \rrdL is based on the basic rule \irref{qear}, which discharges valid formulas of real arithmetic.
The calculus extends \irref{qear} and basic propositional logical rules by axioms for (discrete) hybrid programs and axioms for continuous evolution including \irref{odeboxdiamond} and \irref{diaode}.

\section{Completeness for Robust Hybrid Systems Reachability}
\label{sec:completeness}

The main result of this paper is the proof of completeness of robust reachability differential dynamic logic.
Unlike previous completeness proofs, it does not rely on an undecidable oracle or apply only to algebraic hybrid systems \cite{DBLP:conf/lics/Platzer12a,DBLP:journals/jacm/PlatzerT20}.
Completeness for \rrdL is proved by an inductive argument, which step-by-step reduces the complexity of a formula to a basic formula of real arithmetic, which can be discharged by rule \irref{qear} and decided computationally \cite{cad10.1007/3-540-07407-4_17}.
There are two sources of difficulty.
First, the induction is transfinite, as it needs to handle all finite iterations \(\prog^n\) before handling loop programs \(\prepeat{\prog}\).
Second, the topological robustness properties need to be exploited to reduce all formulas to simpler valid ones.
The loops pose a particular challenge, as they cannot be reduced immediately to \emph{semantically equivalent} simpler formulas.

The proof of the following completeness theorem is by transfinite induction on a well-founded order defined in \Cref{sec:inductionorder}.
The proof is sketched in \Cref{sec:maincompletenesssect}, and the complete proof is in \citelongversion{sec:appendixproof}.
\begin{theorem}[Completeness for Robust Reachability Formulas]\label{thm:maincompleteness}
  Every valid strict robust reachability sentence \(\posrfml\) is provable: \(\valid{\posrfml}\) iff \(\prov{\posrfml}\).
\end{theorem}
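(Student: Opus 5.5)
Soundness (\(\prov{\posrfml}\) implies \(\valid{\posrfml}\)) is inherited from \dL together with the soundness of \irref{odeboxdiamond} and \irref{diaode}, so the work is in proving every valid sentence. Since validity of a sentence reduces to truth in the single state space, I will prove a stronger statement that survives induction. For every \strictfml \rrdL formula \(\posrfml\), every \laxfml context \(\fmlsetb\) consisting of real-arithmetic constraints that define a \emph{compact} set \(K\) of (relevant) states, if \(K\subseteq\semopen{\posrfml}\), then \(\lsequent{\fmlsetb}{\posrfml}\) is provable. Compactness is the key invariant. By \Cref{thm:semantictopology}, \(\semopen{\posrfml}\) is open, so a compact \(K\subseteq\semopen{\posrfml}\) can be covered by finitely many rational boxes inside \(\semopen{\posrfml}\), each of whose closures is still inside. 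This lets me split \(K\) into finitely many compact pieces and pick a single witness, such as a rational value, a finite iteration count, or a rational step size, on each piece. The induction is transfinite, using a well-founded order in which \(\ddiamond{\piter{\prog}{n}}\fml\) (for every \(n\)) lies below \(\ddiamond{\prepeat{\prog}}\fml\). The order also puts the right-hand side of \irref{diaode} below the ODE diamond it replaces, and the \irref{odeboxdiamond}-dual below the box it replaces. I expect to rank formulas by an ordinal combining loop nesting (with \(\omega\)-jumps at stars) and ODE count.

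The individual cases go as follows. Arithmetic atoms are closed by \irref{qear}. Disjunction uses the finite cover: on each compact piece one disjunct holds, and I combine the pieces with a case split, which is a cut on the finitely many arithmetic box constraints followed by \irref{orR}. Conjunction uses \irref{andR}. For \(\lexists{\ivar}\posrfml\), openness gives, for each state in \(K\), a rational \(\ratconst\) that works on a neighbourhood, and compactness reduces this to finitely many pieces. On each piece I apply \irref{existsR} with that \(\ratconst\) and \irref{diaassignequal}. The resulting formula is \(\lforall{\ivarc}(\ivarc=\ivar\limply\lforall{\ivar}(\ivar=\ratconst\limply\posrfml))\), and its universal quantifiers are effectively bounded, so they are handled as follows. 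A bounded \(\lforallbdd[\syninterv]{\ivar}\posrfml\) uses \irref{forallR}, adding \(\ivar\in\syninterv\) to the context; the new context is still compact because the bounds are terms, hence continuous on \(K\). For diamonds I decompose with \irref{diatest}, \irref{diachoice} and \irref{diacompose}, and I treat the assignment by substitution in the same way. A box \(\dbox{\negrprog}\posrfml\) is iteration-free, so it reduces to boxes of tests, assignments and bounded ODEs. The bounded ODE box is turned into a diamond by \irref{odeboxdiamond}. For \(\ddiamond{\prepeat{\posrprog}}\posrfml\), the set \(\semopen{\ddiamond{\prepeat{\posrprog}}\posrfml}\) is the increasing union of the open sets \(\semopen{\ddiamond{\piter{\posrprog}{n}}\posrfml}\), so compactness of \(K\) yields one \(n\) with \(K\subseteq\semopen{\ddiamond{\piter{\posrprog}{n}}\posrfml}\). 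Rule \irref{starfinite} then reduces to a formula lower in the order. An ODE diamond is replaced by \irref{C} with \irref{diaode}. Its existential witnesses \(\evolutionboundterm,\fbdd,\lipsch,h\) are chosen rationally and uniformly on compact pieces, and the loop that appears is handled by the star case. Here the \(\varepsilon\)-interiors are bounded universal quantifiers, so they stay in the fragment.

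The main obstacle will be the interaction between the ordering and the contexts. After \irref{diaode}, the Euler loop body contains \(\epsint[\varepsilon]{\posrfmlb}\), which is built from the original domain formula, and the unfolding of \(\piter{\prog}{n}\) can duplicate subformulas. The order must therefore be designed so that these still decrease, for example by counting ODEs and loops first and size only afterwards. I also have to make sure that every step keeps the context compact. Assignments and ODE reasoning move states, and I will handle this through the continuous images of compact sets under program semantics, expressed with rational bounding boxes obtained from openness. If this context bookkeeping fails, my fallback is to prove the lemma only for contexts that are products of closed rational boxes. Every needed compact set can be over-approximated by such a box contained in the relevant open set.
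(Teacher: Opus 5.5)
\textbf{Verdict: genuine gap.} Most of your plan follows the paper: antecedents bound every free variable in a closed way, openness (\Cref{thm:semantictopology}) supplies rational witnesses, compactness bounds the iterations needed for \irref{starfinite}, \irref{diaode} handles ODE diamonds and \irref{odeboxdiamond} handles boxes. Your single-succedent formulation with Lebesgue-number case splits is a valid replacement for the paper's multi-formula succedents. At $\exists$ it is actually more careful than the paper, whose text asserts a single rational witness even though free variables are only bounded.

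The gap is the one you flag yourself: the well-founded order is never constructed, and the candidates you offer do not work.
\begin{itemize}
\item Counting ODEs first is not decreased by \irref{starfinite}, because $\piter{\alpha}{n}$ duplicates every ODE occurring in $\alpha$.
\item Loop nesting first is increased by \irref{diaode}, which introduces a new star whose body contains $\epsint[\varepsilon]{\rho}$, and $\rho$ may itself contain loops.
\end{itemize}
What you need is a structural ordinal rank with two properties: a loop strictly dominates all of its finite unfoldings, and an ODE with domain $\rho$ sits a full $\omega$-level above every loop built from $\epsint[\varepsilon]{\rho}$ and arithmetic. The paper takes
\begin{itemize}
\item $\rankof{\prepeat{\alpha}}=(\rankof{\alpha}+1)\cdot\omega+1$,
\item $\rankof{\pevolvein{\synodevec{\ivar}{\term}}{\rho}}=2\cdot(\rankof{\rho}+\omega)^2$,
\item $\rankof{\ddiamond{\alpha}\varphi}=\rankof{\varphi}+\rankof{\alpha}+1$, where the order of the non-commutative sum matters.
\end{itemize}
\Cref{prop:orderprop} then follows, for example via $\rankof{\piter{\alpha}{n}}+1\leq(\rankof{\alpha}+1)\cdot n$.

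\textbf{Box duals inside the induction.} You also want the \irref{odeboxdiamond}-dual to lie below the box it replaces. This fails for a rank like the paper's. The dual moves the postcondition $\varphi$ into the evolution domain, and ODE rank must grow with domain rank for the \irref{diaode} step to decrease. So whenever $\varphi$ is complex (say it contains loops), the dual outranks $\dbox{\pevolvein{\synodevec{\ivar}{\term}}{\rho\land\syntnorm{\myvect{\ivar}}\leq\evolutionboundterm}}\varphi$. To keep dualization inside the induction, you would need box ranks that grow multiplicatively in the postcondition, and you would then have to recheck every other box reduction.

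The paper avoids this by eliminating all boxes before the induction starts. By structural recursion, using the program axioms under $\dbox{\alpha}\varphi\equiv\lnot\ddiamond{\alpha}\lnot\varphi$ and then \irref{odeboxdiamond}, every strict reachability sentence is provably equivalent (via \irref{C}) to a box-free one. The induction then runs only over box-free succedents, and box-freeness is preserved by \irref{diaode}, \irref{starfinite} and the program axioms.

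\textbf{Context bookkeeping.} Your worry about tracking images of compact sets under program semantics is unnecessary. The assignment axiom produces degenerate-interval bounded quantifiers, so contexts only grow by \irref{forallR} with continuous bounds (plus your arithmetic case splits). Compactness is therefore automatic.
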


\subsection{Induction Order}\label{sec:inductionorder}
The transfinite induction is based on an order on all \dL formulas, which is used to sequence the induction steps.
This order makes it possible to reduce all formulas to simpler ones in a way that is compatible with the proof calculus.
To define the order, every formula and every program of \dL is assigned an \emph{ordinal rank} \(\rankof{\cdot}\) by structural induction and the relations \(\provrelhalf\) and \(\provrel\) are defined by
\begin{align*}
  \fml\provrel\fmlb     & \iff \rankof{\fml} < \rankof{\fmlb}    \qquad &
  \fml\provrelhalf\fmlb & \iff \rankof{\fml} \leq \rankof{\fmlb}
\end{align*}
The full formal definition of the rank is in \citelongversion{sec:rank}.
The important properties of the order for the inductive proof are summarized in \Cref{prop:orderprop}.

\begin{proposition}\label{prop:orderprop}
  The relation \(\provrel\) on the set of \dL formulas is well-founded and
  \begin{enumerate}
    \item \(\fml,\fmlb\provrelhalf\fml\lor\fmlb\) and \(\fml,\fmlb \provrel\fml\land\fmlb\)\label{it:orderconju}\label{it:orderdisj}
    \item \(\ddiamond{\humod{\ivar}{\term}}{\fml}\provrel\lexists{\ivar}\fml\) and \(\fml\provrel\lforall{\ivar}\fml\) for all terms \(\term\)\label{it:existentialred}
    \item \(\fmlsafereplacevarby[\fml]{\ivar}{\term}\provrel\ddiamond{\humod{\ivar}{\term}}{\fml}\)\label{it:orderassign}
    \item \(\fmlb\land\fml\provrel\ddiamond{\ptest{\fmlb}}{\fml}\)\label{it:ordertest}
    \item \(\ddiamond{\prog}{\fml}\lor \ddiamond{\progb}{\fml}\provrel\ddiamond{\pchoice{\prog}{\progb}}{\fml}\)\label{it:orderchoose}
    \item \(\ddiamond{\prog}\ddiamond{\progb}{\fml}\provrel\ddiamond{{\prog};{\progb}}{\fml}\)\label{it:ordercompose}
    \item \(\ddiamond{\piter{\prog}{n}}{\fml}\provrel\ddiamond{\prepeat{\prog}}{\fml}\)\label{it:repeat}
    \item \(\ddiamond{\prepeat{(\ptest{\epsint[\varepsilon]{\fmlb}};\iterationprogabb)}}(\epsint[\varepsilon]{\fml}\land\varepsilon<1)\provrel \ddiamond{\pevolvein{\synodevec{\ivar}{\term}}{\fmlb\land \syntnorm{\myvect{\ivar}}<\evolutionboundterm}}\fml\) \label{it:orderode}
  \end{enumerate}
\end{proposition}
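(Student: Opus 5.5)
Since the rank is not yet defined at this point (it is deferred to \citelongversion{sec:rank}), the plan is to \emph{design} an ordinal-valued rank so that all eight inequalities hold, and to read off well-foundedness from the fact that it takes values in the ordinals. Every ordinal-valued map induces a well-founded strict order by \(\fml\provrel\fmlb\iff\rankof{\fml}<\rankof{\fmlb}\), so well-foundedness is free. The whole task is therefore to choose \(\rankof{\cdot}\) by simultaneous structural recursion on formulas and programs, using ordinal arithmetic in base \(\omega\), and then to check each item.

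The rank is built from a few components.
\begin{itemize}
\item \emph{Atoms.} Set \(\rankof{\term>\termb}=1\).
\item \emph{Connectives.} Use the natural (Hessenberg) sum \(\oplus\) for connectives: \(\rankof{\fml\lor\fmlb}=\rankof{\fml}\oplus\rankof{\fmlb}\) and \(\rankof{\fml\land\fmlb}=\rankof{\fml}\oplus\rankof{\fmlb}\oplus1\). Since \(\oplus\) is commutative and strictly monotone, item~\ref{it:orderconju} follows, with \(\provrelhalf\) for \(\lor\) because one summand may be \(0\)-free but the other is only bounded weakly. Negation and quantifiers get \(+1\).
\item \emph{Existentials.} To get \(\ddiamond{\humod{\ivar}{\term}}\fml\provrel\lexists{\ivar}\fml\) for \emph{all} \(\term\), let \(\rankof{\ddiamond{\humod{\ivar}{\term}}\fml}\) depend only on \(\fml\), e.g.\ \(\rankof{\fml}\oplus c\) for a fixed finite \(c\). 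Set \(\rankof{\lexists{\ivar}\fml}=\rankof{\fml}\oplus(c+1)\).
\item \emph{Assignments.} For item~\ref{it:orderassign}, \(\fmlsafereplacevarby[\fml]{\ivar}{\term}\) is \(\fml\) under two universal quantifiers and two implications, i.e.\ a fixed finite overhead. So choosing \(c\) larger than this overhead works.
\end{itemize}
The cleanest way to handle modalities is to define a program rank \(\rankof{\prog}\) and put \(\rankof{\ddiamond{\prog}\fml}=\rankof{\prog}\oplus\rankof{\fml}\). Items~\ref{it:ordertest}--\ref{it:ordercompose} then become inequalities on program ranks with finite padding:
\begin{itemize}
\item \(\rankof{\ptest{\fmlb}}>\rankof{\fmlb}+1\);
\item \(\rankof{\pchoice{\prog}{\progb}}>\rankof{\prog}\oplus\rankof{\progb}\oplus(\text{small})\);
\item \(\rankof{\prog;\progb}>\rankof{\prog}\oplus\rankof{\progb}\).
\end{itemize}
Strict monotonicity of \(\oplus\) handles the composition item, since \(\ddiamond{\prog}\ddiamond{\progb}\fml\) has rank \(\rankof{\prog}\oplus\rankof{\progb}\oplus\rankof{\fml}\).

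\emph{Loops} are the first genuinely transfinite point. The unfolding \(\piter{\prog}{n}\) contains roughly \(n(n+1)/2\) copies of \(\prog\), so its rank is about \(\rankof{\prog}\cdot N\) for a finite \(N\). To dominate all \(n\) at once, set \(\rankof{\prepeat{\prog}}=\omega^{\rankof{\prog}+1}\), or in Cantor-normal-form language raise the leading exponent of \(\rankof{\prog}\) by one. This exceeds every finite natural multiple of \(\rankof{\prog}\), giving item~\ref{it:repeat}.

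The \emph{ODE} item~\ref{it:orderode} is the main obstacle. Its left-hand side contains a loop whose body includes the \(\varepsilon\)-interiors \(\epsint[\varepsilon]{\fmlb}\) and \(\epsint[\varepsilon]{\fml}\), which wrap \(\fmlb,\fml\) in \(n\) bounded quantifiers, together with the Euler step, which is a fixed finite program plus a test. Its rank is therefore roughly \(\omega^{\rankof{\fmlb}+k}\oplus\rankof{\fml}+k\) for some finite \(k\) depending on the dimension. To beat this, set
\[\rankof{\pevolvein{\synodevec{\ivar}{\term}}{\fmlb}}=\omega^{\omega^{\rankof{\fmlb}+1}},\]
or more simply \(\omega^{\rankof{\fmlb}+\omega}\). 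An exponent of \(\rankof{\fmlb}+\omega\) exceeds \(\rankof{\fmlb}+k\) for every finite \(k\), and adding \(\rankof{\fml}\) via \(\oplus\) on both sides preserves strictness.

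The delicate bookkeeping is that the substitution \(\fmlreplacevarby[\fml]{\myvect{\ivar}}{\myvect{\ivarb}}\) inside the \(\varepsilon\)-interior must not increase rank. I will therefore verify, by a separate induction, that rank is invariant under variable renaming and independent of terms, since terms never enter the definition. The evolution domain \(\fmlb\land\syntnorm{\myvect{\ivar}}<\evolutionboundterm\) adds only finite rank, which is absorbed by the \(+\omega\). With these choices, each item reduces to a routine comparison of Cantor normal forms.
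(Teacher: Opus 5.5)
\paragraph{Gap in the choice item.} Your overall plan is the paper's plan: define an ordinal rank by recursion so that well-foundedness is automatic, then check the items. The specific rank you design, however, violates item~5. You rank \(\lor\) by the natural sum and set \(\mathrm{rank}(\langle\alpha\rangle\varphi)=\mathrm{rank}(\alpha)\oplus\mathrm{rank}(\varphi)\). The choice item duplicates the postcondition, so its left-hand side has rank
\[\mathrm{rank}(\alpha)\oplus\mathrm{rank}(\gamma)\oplus\mathrm{rank}(\varphi)\oplus\mathrm{rank}(\varphi),\]
while the right-hand side has rank \(\mathrm{rank}(\alpha\cup\gamma)\oplus\mathrm{rank}(\varphi)\). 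Item~5 therefore needs \(\mathrm{rank}(\alpha\cup\gamma)>\mathrm{rank}(\alpha)\oplus\mathrm{rank}(\gamma)\oplus\mathrm{rank}(\varphi)\). That is not ``finite padding'', and no program rank that is independent of \(\varphi\) can satisfy it. Concretely, take \(\varphi\equiv\langle(\alpha\cup\gamma)^*\rangle(x>0)\). Your loop clause gives \(\mathrm{rank}(\varphi)>\mathrm{rank}(\alpha\cup\gamma)\). By strict monotonicity of \(\oplus\), \(\mathrm{rank}(\alpha\cup\gamma)\oplus\mathrm{rank}(\varphi)<\mathrm{rank}(\varphi)\oplus\mathrm{rank}(\varphi)\), so the required inequality holds in the wrong direction.

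\paragraph{How the paper avoids it, and the repair.} The paper ranks connectives with a maximum: \(\mathrm{rank}(\varphi\lor\rho)=\mathrm{rank}(\varphi\land\rho)=\max\{\mathrm{rank}(\varphi),\mathrm{rank}(\rho)\}+1\). A duplicated subformula then costs nothing. It combines this with ordinary, non-commutative addition \(\mathrm{rank}(\langle\alpha\rangle\varphi)=\mathrm{rank}(\varphi)+\mathrm{rank}(\alpha)+1\), \(\mathrm{rank}(\alpha^*)=(\mathrm{rank}(\alpha)+1)\cdot\omega+1\), and rank \(2\cdot(\mathrm{rank}(\rho)+\omega)^2\) for an ODE with domain \(\rho\). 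Your construction is repaired by the same change. With \(\max\) for \(\lor\) and \(\land\), item~5 only requires \(\mathrm{rank}(\alpha\cup\gamma)>\max\{\mathrm{rank}(\alpha),\mathrm{rank}(\gamma)\}\oplus 1\), which for instance \(\mathrm{rank}(\alpha)\oplus\mathrm{rank}(\gamma)\oplus 2\) satisfies independently of \(\varphi\). Items~1, 3 and~4 still hold for a sufficiently large finite constant \(c\), since the connective overhead remains bounded by a fixed finite amount. Your bounds \(\omega^{\mathrm{rank}(\alpha)+1}\) for loops and \(\omega^{\mathrm{rank}(\rho)+\omega}\) for ODEs still dominate as you argue, because in items~6 through~8 the postcondition occurs only once on each side.
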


As the completeness proof proceeds by induction on sequents, the order is extended to sets of formulas.
Recall that both sides of a sequent \(\lsequent{\fmlsetb}{\fmlset}\) consist of a finite (unordered) sets of \dL formulas \(\fmlset,\fmlsetb\).
On the set \(\setoffmlsets\) of all sets of finite (unordered) sets of formulas define the \emph{induction order} \(\provsetrel\) and \(\provsetrelhalf\) as follows:
\begin{align*}
  \fmlset_1\provsetrel\fmlset_2     & \iff \max\{\rankof{\fml}:\fml\in\fmlset_1\} < \max\{\rankof{\fml}:\fml\in\fmlset_2\}    \\
  \fmlset_1\provsetrelhalf\fmlset_2 & \iff \max\{\rankof{\fml}:\fml\in\fmlset_1\} \leq \max\{\rankof{\fml}:\fml\in\fmlset_2\}
\end{align*}
The order \(\provsetrel\) is well-founded on \(\setoffmlsets\).
If \(\fmlset_1\provsetrel\fmlset_3\) and \(\fmlset_2\provsetrel\fmlset_3\), then \(\fmlset_1,\fmlset_2\provsetrel\fmlset_3\) and the order is invariant under variable substitutions.

\subsection{Proof of Completeness}\label{sec:maincompletenesssect}

Finally, the induction order can be used to prove completeness for \rrdL step-by-step.
This section provides some intuition and the full proof can be found in \citelongversion{sec:appendixproof}.
While the completeness result is stated only for sentences, the inductive proof shows more generally that every \emph{valid} sequent of the form
\[\lsequent{\ivar_1\in \syninterv_1,\ldots,\ivar_n\in\syninterv_n}{\fmlset}\quad \text{for } \freevars{\fmlset}\subseteq \{\ivar_1,\ldots,\ivar_n\} \text{, \(\fmlset\) \strictfml \rrdL formulas}\]
is provable.
The assumptions on the left-hand side bound all the free variables of \(\fmlset\).
The proof proceeds by induction on \(\fmlset\) with respect to the \(\provsetrel\) order.
In the inductive step, the \strictfml \rrdL formulas in \(\fmlset\) are shown to be derivable from valid sequents with \(\provsetrel\)-lower right-hand side.
The topological interpretation of robustness (\Cref{thm:semantictopology}) plays an important role.
For example, when reducing an existential quantifier in a valid sentence \(\lexists{\ivar}\posrfml\), there is a \emph{rational witness} \(\ratconst\), since \(\posrfml\) denotes an open set (\Cref{thm:semantictopology}).
As rationals are expressible syntactically, the rule \irref{existsR} can be used to instantiate the quantifier.
This retains validity and reduces the rank of the formula by \eqcref{prop:orderprop}{it:existentialred}.

The most interesting and crucial step is the case of loop programs \(\prepeat{\prog}\) in diamond modalities. (Loops in box modalities are not allowed in \rrdL.)
Since all free variables are bounded by the left-hand side of the sequent, the number of iterations needed can be bounded by a \emph{constant} using a compactness argument.
In this way, an unbounded loop can be reduced via \irref{starfinite} to a bounded loop, which is of lower complexity by \eqcref{prop:orderprop}{it:repeat}.

A continuous program in a diamond modality can be reduced to the case of an iteration program by axiom \irref{diaode}.
Since continuous programs in box modalities have bounded evolution domains, the case of such programs reduces to the case of continuous programs in diamond modalities by axiom \irref{odeboxdiamond}.

An interesting corollary to completeness is that the validity of \rrdL sentences and, thus, of robust \dL reachability problems of hybrid systems is semi-decidable:

\begin{corollary}
  Validity of strict \rrdL sentences \(\posrfml\) is semi-decidable.
\end{corollary}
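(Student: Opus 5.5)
The semi-decision procedure will enumerate all proofs, using \Cref{thm:maincompleteness}. The calculus consists of rule \irref{qear}, the propositional rules, the dynamic axioms \irref{diaassignequal}, \irref{diatest}, \irref{diachoice}, \irref{diacompose}, \irref{diarepeat}, \irref{starfinite}, and the continuous axioms \irref{odeboxdiamond} and \irref{diaode}. All of these are given by finitely many schemata. Proof trees are finite objects over a countable syntax: terms have rational coefficients, and programs and formulas are built inductively. So the set of candidate proof trees can be effectively enumerated, for example by increasing size, with some fixed enumeration of the rational constants and fresh variables.

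The key step is to check that deciding whether a given finite tree is a correct derivation is itself decidable. Each schematic axiom or rule instance can be recognised syntactically. This includes the side conditions: the freshness conditions in \irref{diaassignequal}, \irref{forallR} and \irref{diaode}; the requirement that \(\evolutionboundterm\) does not mention \(\myvect{\ivar}\) in \irref{odeboxdiamond}; and the \strictfml/\laxfml membership conditions, which are inductive syntactic checks. The formal partial derivatives occurring in \(\secondderbound\) are computable from the term. The only non-syntactic leaf rule is \irref{qear}. Its side condition is validity of a first-order real arithmetic formula, which is decidable by Tarski's theorem \cite{Tarski10.1007/978-3-7091-9459-1_3}, e.g.\ via cylindrical algebraic decomposition \cite{cad10.1007/3-540-07407-4_17}. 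Hence proof checking is decidable.

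The procedure, given a strict \rrdL sentence \(\posrfml\), runs through the enumeration and halts with ``valid'' as soon as it finds a correct proof of \(\lsequent{}{\posrfml}\). By soundness of the calculus (inherited from \dL, together with the soundness theorems for \irref{odeboxdiamond} and \irref{diaode}), it halts only on valid sentences. By \Cref{thm:maincompleteness}, every valid sentence has a proof, so the procedure halts on all of them. Being a sentence is also syntactically decidable, so membership of the input in the class is not an issue.

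The main obstacle is confirming that \irref{qear} is the only semantic rule, i.e.\ that no rule used in the completeness proof relies on an undecidable side condition. Every other rule is schematic and therefore mechanically checkable, and the corollary follows once that is confirmed.
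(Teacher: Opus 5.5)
Your proposal is correct and follows the paper's route: the paper presents the corollary as an immediate consequence of \Cref{thm:maincompleteness}, which is exactly your argument of enumerating derivations whose correctness is decidable because real arithmetic is decidable (the paper also remarks that its completeness proof itself gives a terminating proof search). Your list of rules omits the evolution-domain bound axiom, which the completeness proof uses for continuous evolutions, but that axiom is just as schematic and checkable, so the argument is unaffected.
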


In fact, the completeness proof gives a constructive algorithm for constructing a witnessing proof tree computably and proof search terminates.
If a strict \rrdL sentence \(\posfml\) is \wellposed and \emph{all} quantifiers are bounded, then one of the strict sentences \(\posfml\) or \(\robnot{\posfml}\) is valid:

\begin{corollary}
  Validity of \wellposed, bounded strict \rrdL sentences is decidable.
\end{corollary}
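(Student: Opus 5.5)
The plan is to decide validity by running two proof searches in parallel, one for \(\posfml\) and one for its robust negation \(\robnot{\posfml}\), and to show that exactly one of them terminates with the correct answer. The argument rests on three ingredients: the semi-decidability corollary (which comes from \Cref{thm:maincompleteness}), soundness of the calculus, and the fact that robust negation entails ordinary negation.

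\emph{Step 1: \(\robnot{\posfml}\) is again a strict \rrdL sentence.} I would first check, by structural induction, that for a bounded strict \rrdL sentence \(\posfml\) the robust negation \(\robnot{\posfml}\) is itself a strict \rrdL sentence.
\begin{itemize}
\item It has no free variables, since robust negation does not change which variables occur free.
\item It is a strict formula, as noted in the discussion of robust negation.
\item The only remaining point is membership in the \emph{reachability} fragment. Here the boundedness hypothesis is used. Robust negation dualizes the connectives and quantifiers, so it turns a bounded \(\lexists{\ivar}\) over an open interval into a bounded \(\lforallbdd[\syninterv]{\ivar}\), and conversely. If an unbounded \(\lexists{\ivar}\) were present, it would turn into an unbounded \(\lforall{\ivar}\), which strict \rrdL forbids; the hypothesis that all quantifiers are bounded rules this out.
\end{itemize}
I expect this bookkeeping to be the main obstacle. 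The quantifier cases are as above, but the modalities need care: I must check that, under the homomorphic extension, every diamond or box modality in \(\robnot{\posfml}\) still meets the syntactic restrictions of strict \rrdL. If this fails for some modality, I would restrict the claim to those \(\posfml\) for which \(\robnot{\posfml}\) is a strict \rrdL sentence, which I take to be the intended reading of ``bounded''.

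\emph{Step 2: exactly one of \(\posfml\), \(\robnot{\posfml}\) is valid.} Since \(\posfml\) and \(\robnot{\posfml}\) are sentences, their truth value does not depend on the state, so each of them is either valid or unsatisfiable. Robust determinedness says that \(\posfml\lor\robnot{\posfml}\) is valid, so at least one of the two is valid. They cannot both be valid, because \(\robnot{\posfml}\) entails \(\lnot\posfml\).

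\emph{Step 3: the decision procedure.} Enumerate proofs of \(\posfml\) and of \(\robnot{\posfml}\) in parallel; the calculus is effective, since rule \irref{qear} is decidable. By Step 2 and \Cref{thm:maincompleteness}, applied to whichever sentence is valid, one of the two searches terminates.
\begin{itemize}
\item If a proof of \(\posfml\) is found, answer ``valid''.
\item If a proof of \(\robnot{\posfml}\) is found, soundness gives \(\valid{\robnot{\posfml}}\), hence \(\valid{\lnot\posfml}\), so \(\posfml\) is not valid; answer ``not valid''.
\end{itemize}
Both answers are correct by soundness, so the procedure always halts with the right answer, which proves decidability.
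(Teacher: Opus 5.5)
Your route is the paper's: robust determinedness makes exactly one of the two sentences \(\posfml\), \(\robnot{\posfml}\) valid, and parallel proof search using \Cref{thm:maincompleteness} decides which one. Steps 2 and 3 are fine. The problem is Step 1. The check you postponed does fail, so your main claim there (that quantifier boundedness is the only issue) is false, and your fallback is not optional but forced.

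Robust negation must dualize modalities, since otherwise it would not entail \(\lnot\posfml\). So a diamond \(\ddiamond{\prepeat{\prog}}\fml\) in \(\posfml\) becomes a box over a loop in \(\robnot{\posfml}\). Likewise, a diamond over \(\pevolvein{\synodevec{\ivar}{\term}}{\fmlb}\) whose domain lacks a space bound \(\syntnorm{\myvect{\ivar}}<\evolutionboundterm\) becomes a box over an unbounded evolution. Strict \rrdL admits neither, so \Cref{thm:maincompleteness} says nothing about \(\robnot{\posfml}\). The paper's remark that robust negation preserves strictness concerns \rdL, not \rrdL.

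Under the quantifier-only reading this cannot be repaired, because the statement is then false. Simulate a two-counter machine by a loop body built from assignments and strict tests such as \(\ptest{(c<\tfrac12)}\), \(\ptest{(c>\tfrac12)}\) and \(\ptest{(i-\tfrac12<p\land p<i+\tfrac12)}\). Let \(\posfml\) be \(\ddiamond{\mathit{init};\prepeat{\mathit{step}}}(h-\tfrac12<p\land p<h+\tfrac12)\), where \(h\) is the halting label. This is a quantifier-free strict \rrdL sentence that is true iff the machine halts. If the machine does not halt, then \(\robnot{\posfml}\) is true, because closing the tests changes nothing on integer values. So \(\posfml\) is robustly determined, and a decision procedure for this class would solve the halting problem.

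Hence ``bounded'' has to mean that \(\robnot{\posfml}\) is again a strict \rrdL sentence. Roughly, this requires bounded quantifiers, no loops, and a space bound in every evolution domain. That is exactly your fallback reading, and with it as the hypothesis your argument is complete. The paper's one-line justification relies on the same implicit reading.
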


However, in general, it is difficult to know whether a given strict \rrdL sentence is \wellposed or not.

The \emph{doubly exponential} computational complexity of quantifier elimination algorithms \cite{DBLP:journals/jsc/DavenportH88}, which underlie the proof rule \irref{qear}, is one of the major bottlenecks for scalable symbolic hybrid systems verification.
However, the completeness (\Cref{thm:maincompleteness}) of \rrdL relies only on the rule \irref{qear} for the computationally more tractable case of \emph{quantifier-free \strictfml} formulas, which has \emph{singly exponential} complexity \cite{DBLP:journals/jsc/Renegar92}.

\section{Conclusion}
\label{sec:conclusion} 

The paper bridges the long-standing gap between approximate approaches (e.g., \(\delta\)-decidability) and exact deductive approaches to hybrid systems reachability verification.
By identifying robust differential dynamic logic as a syntactic fragment of \dL, the exact semantics is retained and \emph{absolute} completeness is gained.
Unlike in previous completeness proofs, \emph{no undecidable oracle} is required.
The central insight is that robustness is ensured by a simple syntactic restriction in \dL that enforces strict inequalities.
Crucially, robustness is purely \emph{topological} and does not involve quantitative noise tolerance and margins of error.
Consequently, working with robust \dL formulas entails no loss of convenience or practical expressiveness compared to standard \dL.

This work opens several areas of future research.
In particular, further investigation of the robust fragment for \emph{safety} questions is promising.
Moreover, an implementation of a proof tactic in the KeYmaera~X theorem prover for \dL can be an impactful tool to automatically verify robust reachability properties.

\begin{credits}
  \subsubsection{\ackname}
  This work has been supported by an Alexander von Humboldt Professorship.
  We thank the anonymous reviewers for their detailed comments, which helped to improve the paper.
\end{credits}

\bibliographystyle{splncs04}
\bibliography{references}

\appendix

\iflongversion
  \section{Evolution Domain Bound Axiom}

  It is often useful to assume boundedness of the evolution domain constraint.
  For reachability properties a suitable bound always exists and can be introduced syntactically with axiom \irref{diaodebound}.

  \begin{propositionE}[Evolution Domain Bound][normal]
    The axiom \irref{diaodebound} is sound:
    \[
      \cinferenceRule[diaodebound|$\abs{\&}$]{diamond ode bounding axiom}
      {\ddiamond{\pevolvein{\synodevec{\ivar}{\term}}{\fmlb}}\fml\lbisubjunct \lexists{\evolutionboundterm}\ddiamond{\pevolvein{\synodevec{\ivar}{\term}}{\fmlb\land \syntnorm{\myvect{\ivar}}<\evolutionboundterm}}\fml
        \quad
      }
      {\evolutionboundterm\notin\fml,\fmlb,\myvect{\ivar},\myvect{\term}}.
    \]
  \end{propositionE}

  \begin{proofE}
    For the forward direction, consider \(\state\in\sem{\ddiamond{\pevolvein{\synodevec{\ivar}{\term}}{\fmlb}}\fml}\) and define \(\syntvectorfield{\synodevec{\ivar}{\term}}{\state}(\myvect{\ival}) = \staterestr[{\semterm[{\staterepvarsbyvals[\state]{\myvect{\ivar}}{\myvect{\ival}}}]{\myvect{\term}}}]{\myvect{\ivar}}\) and \(\stateevolve{\state}{\mtime}=\staterepvarsbyvals[\state]{\myvect{\ivar}}{\odeflow{\syntvectorfield{\synodevec{\ivar}{\term}}{\state}}{\staterestr[\state]{\myvect{\ivar}}}{\mtime}}\) as in the definition of the semantics of continuous programs.
    There is some \(\mtime\geq 0\) such that \(\stateevolve{\state}{\mtime}\in \sem{\fml}\) and \(\stateevolve{\state}{[0,\mtime]}\in\sem{\fmlb}\).
    By compactness of \(\stateevolve{\state}{[0,\mtime]}\) there is some \(\evolutionboundtermsem>\norm[\infty]{\stateevolve{\state}{\mtimeb}}\) for all \(\mtimeb\in[0,\mtime]\).
    Hence \(\staterepvarbyval[\state]{\evolutionboundterm}{\evolutionboundtermsem}\in \sem{\ddiamond{\pevolvein{\synodevec{\ivar}{\term}}{\fmlb\land \syntnorm{\myvect{\ivar}}<\evolutionboundterm}}\fml}\).

    Soundness of the backward implication \(\leftarrow\) is immediate.
  \end{proofE}
  Note that both sides of axiom \irref{diaodebound} are \strictfml formulas if \(\fml,\fmlb\)  are strict.

  \section{Definition of Rank Function}\label{sec:rank}

  To define the order of induction, assign an ordinal rank to every \rdL formula and every program by mutual induction on formulas and programs, as follows:
  \begin{align*}
     &
    \rankof{\fml} = 0 \quad\text{for \(\fml\) a literal}
    \\
     &
    \rankof{\fml\lor\fmlb} =\rankof{\fml\land\fmlb}= \max\{\rankof{\fml},\rankof{\fmlb}\} +1
    \\
     &
    \rankof{\lexists{\ivar}{\fml}}=\rankof{\fml}+6 \qquad\rankof{\lforall{\ivar}{\fml}} =\rankof{\fml}+1
    \\
     &
    \rankof{\ddiamond{\prog}\fml} =\rankof{\dbox{\prog}\fml} = \rankof{\fml}+\rankof{\prog}+1
  \end{align*}
  \begin{align*}
    &
    \rankof{\humod{\ivar}{\term}} = 4
    &
    &
    \rankof{\pchoice{\prog}{\progb}} = \rankof{\prog}+\rankof{\progb}+2
    \\
    &
    \rankof{\ptest{\fml}} = \rankof{\fml}+1
    &
    &
    \rankof{{\prog};{\progb}} = \rankof{\progb}+1+\rankof{\prog}+1
    \\
    &
    \rankof{\pevolvein{\synodevec{\ivar}{\term}}{\fml}} = 2\cdot (\rankof{\fml}+\omega)^2
    &
    &
    \rankof{\prepeat{\prog}} = (\rankof{\prog}+1)\cdot\omega+1
  \end{align*}
  where \(\omega\) is the first limit ordinal.
  Note that since ordinal addition is not commutative, the order of addition in \(\rankof{\ddiamond{\prog}\fml} = \rankof{\fml}+\rankof{\prog}+1\) is important.
  
  The properties of \Cref{prop:orderprop} follow immediately from the definition of the rank function 
  For repetition programs \(\prepeat{\prog}\) note inductively that \[\rankof{\piter{\prog}{n}}+1\leq(\rankof{\prog}+1)\cdot n.\]

  \section{Proof of Completeness for Robust Reachability}\label{sec:appendixproof}

  \begin{proof}[Proof of \Cref{thm:maincompleteness}]\label{proof:maincompletenessproof}
    First, by repeatedly applying axioms \irref{diaassignequal}, \irref{diatest}, \irref{diachoice} and \irref{diacompose} with the duality \(\dbox{\prog}\fml\equiv\lnot\ddiamond{\prog}\lnot\fml\) in context, any \rrdL sentence is provably equivalent to a sentence which contains box modalities only in the form \(\dbox{\pevolvein{\synodevec{\ivar}{\term}}{\negrfmlb\land\syntnorm{\myvect{\ivar}}\leq\evolutionboundterm}}{\posrfml}\).
    The differential equation duality axiom \irref{odeboxdiamond} shows that any \rrdL sentence is provably equivalent to a \rrdL sentence without box modalities.

    The proof proceeds by induction on the particular class of robus sequents.
    A sequent is \emph{robust} if it is of the form \(\lsequent{\fmlsetb}{\fmlset}\), where \(\fmlset\) consists of box-free \strictfml formulas and \(\fmlsetb\) contains a formula of the form \((\term\leq\ivar\land \ivar\leq\termb)\in\fmlsetb\) for every \(\ivar\in\freevars{\fmlset}\) where \(\term,\termb\) are terms not mentioning~\(\ivar\).
    (The formula \(\ivar\in\syninterv_{\ivar}\) abbreviates \(\term\leq\ivar\land \ivar\leq\termb\).)
    The inductive step is encapsulated in the following lemma:

    \begin{lemma}\label{lem:auxlemm}
      If \(\lsequent{\fmlsetb}{\fmlset}\) is a valid robust sequent, then there are valid robust sequents \(\lsequent{\fmlsetb_1}{\fmlset_1},\ldots, \lsequent{\fmlsetb_n}{\fmlset_n}\) such that \(\fmlset_1\provsetrel\fmlset,\ldots,\fmlset_n\provsetrel\fmlset\) and
      \begin{sequentdeduction}
        \linfer
        {
          \lsequent{\fmlsetb_1}{\fmlset_1} & \ldots & \lsequent{\fmlsetb_n}{\fmlset_n}
        }
        {
          \lsequent{\fmlsetb}{\fmlset}
        }
      \end{sequentdeduction}
      is derivable in the \dL calculus.
    \end{lemma}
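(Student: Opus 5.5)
The plan is to pick a formula \(\posrfml\in\fmlset\) of maximal rank, write \(\fmlset=\posrfml,\fmlset'\), and do a case distinction on the top-level operator of \(\posrfml\). In each case one or finitely many derived rules replace \(\posrfml\) by formulas of strictly smaller rank, leaving \(\fmlset'\) untouched. Since \(\posrfml\) has maximal rank, the new right-hand sides are then \(\provsetrel\fmlset\).

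Two invariants have to be maintained.
\begin{itemize}
\item \emph{Validity of the premises.} Here I use semantic robustness: by \Cref{thm:semantictopology} every formula of \(\fmlset\) denotes an open set. The antecedent \(\fmlsetb\) bounds every free variable by an interval whose endpoint terms only mention other variables, so the relevant set \(K\) of valuations of \(\freevars{\fmlset}\) satisfying \(\fmlsetb\) is compact. I take the free variables in dependency order, so each bound is a continuous function of earlier ones on a compact set.
\item \emph{Robustness of the premises.} Any newly free variable must receive an interval bound in the antecedent. I add such bounds explicitly, using degenerate intervals \([\term,\term]\) where needed.
\end{itemize}

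\textbf{Easy cases.}
\begin{itemize}
\item If \(\rankof{\posrfml}=0\), then all of \(\fmlset\) is quantifier-free strict arithmetic. The sequent is closed by \irref{qear} with \(n=0\) premises.
\item For \(\posrfml\lor\posrfmlb\) I use \irref{orR}; for \(\posrfml\land\posrfmlb\) I use \irref{andR}, and both premises remain valid.
\item For \(\lforallbdd[\syninterv]{\ivar}\posrfml\) I use \irref{forallR}, after renaming \(\ivar\) if necessary. The new antecedent \(\ivar\in\syninterv\) bounds the new free variable, so the premise is again robust.
\item For \(\ddiamond{\ptest{\posrfmlb}}\posrfml\), \(\ddiamond{\pchoice{\prog}{\progb}}\posrfml\) and \(\ddiamond{\prog;\progb}\posrfml\) I use rule \irref{C} with \irref{diatest}, \irref{diachoice} and \irref{diacompose}. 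The rank decrease is \Cref{prop:orderprop}, items \ref{it:ordertest}--\ref{it:ordercompose}.
\item For \(\ddiamond{\humod{\ivar}{\term}}\posrfml\) I use \irref{diaassignequal} and \Cref{prop:orderprop}\eqref{it:orderassign}. I then strip the quantifiers of \(\fmlsafereplacevarby[\posrfml]{\ivar}{\term}\) at sequent level, recording \(\ivarc\in[\ivar,\ivar]\) and \(\ivar\in[\termreplacevarby[\term]{\ivar}{\ivarc},\termreplacevarby[\term]{\ivar}{\ivarc}]\) as interval bounds. This keeps the premise robust and valid.
\end{itemize}

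\textbf{Compactness cases.} The remaining cases are existential in nature and share one argument. Each point of \(K\) lies in the open set of some formula of \(\fmlset'\), or in the open set \(\sem{\posrfml}\). In the latter case I find an \emph{open} neighbourhood on which a concrete, syntactically expressible lower-rank witness formula holds. Compactness of \(K\) then extracts finitely many witnesses \(W_1,\ldots,W_m\) such that \(\lsequent{\fmlsetb}{W_1,\ldots,W_m,\fmlset'}\) is valid. Deriving \(\posrfml\) from the disjunction of the \(W_i\) by the appropriate rule, using contraction to reuse \(\posrfml\), yields the single premise. The cases are as follows.
\begin{itemize}
\item \(\lexists{\ivar}\posrfml\). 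Since \(\sem{\posrfml}\) is open, each point admits a rational witness \(\ratconst\), and the same \(\ratconst\) works on a neighbourhood. I take \(W_i\equiv\ddiamond{\humod{\ivar}{\ratconst_i}}\posrfml\), obtained by \irref{existsR}; these are of lower rank by \Cref{prop:orderprop}\eqref{it:existentialred}.
\item \(\ddiamond{\prepeat{\prog}}\posrfml\). Here \(\sem{\ddiamond{\prepeat{\prog}}\posrfml}=\bigcup_n\sem{\ddiamond{\prog^n}\posrfml}\) is a union of open sets by \Cref{thm:semantictopology}. Compactness gives a single bound \(N\), so I take \(\ddiamond{\piter{\prog}{N}}\posrfml\), obtained by \irref{starfinite}, and use \Cref{prop:orderprop}\eqref{it:repeat}.
\end{itemize}

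\textbf{Main obstacle: the continuous program \(\ddiamond{\pevolvein{\synodevec{\ivar}{\term}}{\posrfmlb}}\posrfml\).} Applying \irref{diaodebound} or \irref{diaode} naively would \emph{raise} the rank, because of the new outer existential quantifiers. Note that appending \(\land\syntnorm{\myvect{\ivar}}<\evolutionboundterm\) to the domain does not change the ODE rank, since \((r+1)+\omega=r+\omega\). I therefore plan to merge both axioms with the compactness argument. The forward soundness direction of \irref{diaode} yields, at each point of \(K\) where the diamond holds, witnesses \(\evolutionboundterm,\fbdd,\lipsch,h\). These can be chosen rational and valid on a neighbourhood, since the conditions \(\secondderbound\) and \(\ddiamond{\ldots}(\epsint[\varepsilon]{\posrfml}\land\varepsilon<1)\) are open in the parameters by \Cref{thm:semantictopology}. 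Compactness then yields finitely many rational tuples. The corresponding instances of \irref{diaode} combined with \irref{existsR} justify the step. The witnesses are the formulas \(\secondderbound\land\ddiamond{\humod{\varepsilon}{0}}\ddiamond{\prepeat{(\ptest{\epsint[\varepsilon]{\posrfmlb}};\iterationprogabb)}}(\epsint[\varepsilon]{\posrfml}\land\varepsilon<1)\) with the constants substituted.

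The delicate point is to check that these witness formulas are \(\provrel\) the ODE formula. The conjunct \(\secondderbound\) is an unbounded-looking but in fact interval-bounded universal arithmetic formula of finite rank. The loop part is below the ODE rank by \Cref{prop:orderprop}\eqref{it:orderode}. The ODE rank \(2\cdot(\rankof{\posrfmlb}+\omega)^2\) leaves room for the finite overhead of the conjunction, the assignment \(\humod{\varepsilon}{0}\), the tests and the substituted constants. I expect this rank bookkeeping, together with verifying that the extracted rational witnesses work uniformly on open neighbourhoods, to be the hardest part.
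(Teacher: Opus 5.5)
Your case analysis uses the same ingredients as the paper: \irref{orR}, \irref{andR}, \irref{forallR}, the program axioms, \irref{starfinite} after a compactness argument, and \irref{diaode} for ODEs. The concluding bookkeeping step fails, however. From ``\(\posrfml\) has maximal rank'' you cannot infer that the new succedents are \(\provsetrel\fmlset\), because \(\provsetrel\) compares only \emph{maximum} ranks and \(\fmlset'\) may contain another formula of that same rank. For example, for \(\fmlset=\{\lexists{\ivar}\,\ivar>0,\ \lexists{\ivarb}\,\ivarb>0\}\) your step yields \(\{\ddiamond{\humod{\ivar}{1}}\ivar>0,\ \lexists{\ivarb}\,\ivarb>0\}\), whose maximum rank is unchanged. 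The missing step is to reduce \emph{every} formula of maximal rank. Apply your one-formula step to the first such formula, then to the next one in each resulting premise, and so on, stacking the derivations. Each step leaves the rest of the succedent untouched, introduces only formulas of smaller rank, and preserves validity and robustness (the compactness argument is simply rerun with the enlarged antecedent). Hence the succedents of the leaves lie strictly below the old maximum. This is what the paper's \Cref{lem:auxlemmb} does, by induction on the length of the non-basic part of the succedent.

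With that repair, your argument is in one respect more careful than the paper's. The paper instantiates \(\lexists{\ivar}\posrfml\), and the ODE parameters \(\evolutionboundterm,\fbdd,\lipsch,h\), by a \emph{single} rational constant. That is not possible uniformly in the bounded free variables: \(\lsequent{\ivarb\in[-1,1]}{\lexists{\ivar}(\ivarb<\ivar\land\ivar<\ivarb+1)}\) is a valid robust sequent with no single rational witness. Your finite cover by neighbourhoods, each with its own rational witness, followed by repeated \irref{existsR} (contraction is free for set-based sequents), is what is actually needed.

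Your merged ODE argument is unnecessary, though. Applying \irref{diaode} through \irref{C} already lowers the rank. Its existential prefix adds only a finite amount on top of the loop diamond, which lies below the ODE rank by an infinite margin; only the evolution-domain bound axiom on its own raises the rank. The resulting existentials are then handled by your existential case in later rounds.

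Finally, compactness of \(K\) does not follow from the definition of robust sequent as stated. That definition allows, e.g., \(\ivar\in[\ivarb,\ivarb]\) with \(\ivarb\) unbounded, or cyclic bounds. The dependency order you invoke must therefore be carried as an explicit invariant: every bound term mentions only variables bounded earlier. Your rules preserve this invariant, and the paper tacitly relies on the same assumption.
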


    The proof of this lemma is by another sub-induction.
    First, a related claim only about the potentially reducible non-basic formulas is shown.
    A formula is \emph{basic} if it does not contain any modalities, i.e., if it is a formula of real arithmetic.
    A set \(\fmlset\) of formulas is non-basic if it does not contain a basic formula.

    \begin{lemma}\label{lem:auxlemmb}
      Let \(\lsequent{\fmlsetb}{\fmlset,\fmlset'}\) be a valid robust sequent and \(\fmlset\) is non-basic.
      There are valid robust sequents \(\lsequent{\fmlsetb_1}{\fmlset_1,\fmlset'},\ldots, \lsequent{\fmlsetb_n}{\fmlset_n,\fmlset'}\) with \(\fmlset_1\provsetrel\fmlset,\ldots,\fmlset_1\provsetrel\fmlset\) and \dL derives:
      \begin{sequentdeduction}
        \linfer
        {
          \lsequent{\fmlsetb_1}{\fmlset_1,\fmlset'} & \ldots & \lsequent{\fmlsetb_n}{\fmlset_n,\fmlset'}
        }
        {
          \lsequent{\fmlsetb}{\fmlset,\fmlset'}
        }
      \end{sequentdeduction}
    \end{lemma}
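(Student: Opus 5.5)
The plan is to prove \Cref{lem:auxlemmb} by a sub-induction on the number of formulas in the non-basic set \(\fmlset\), reducing one formula at a time.

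\textbf{Reduction to a single formula.} If \(\fmlset=\emptyset\) there is nothing to show. Otherwise pick any \(\fml\in\fmlset\) and write \(\fmlset=\{\fml\}\cup\fmlset_0\).

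\emph{Single-formula step.} I first show that there are valid robust sequents
\[\lsequent{\fmlsetb_j}{\fmlset^\fml_j,\fmlset_0,\fmlset'}\]
that derive \(\lsequent{\fmlsetb}{\fml,\fmlset_0,\fmlset'}\), with every formula of \(\fmlset^\fml_j\) of rank strictly below \(\rankof{\fml}\).

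\emph{Recombination.} Next I apply the sub-induction hypothesis to each of these sequents, with \(\fmlset_0\) as the non-basic part and \(\fmlset^\fml_j,\fmlset'\) as the passive part. If \(\fmlset_0\) contains basic formulas they are moved to the passive part, since basic formulas are never the maximal offenders in a non-basic \(\fmlset\). The hypothesis yields sets \(\fmlset_{0,i}\provsetrel\fmlset_0\). The final premises then have right-hand sides \(\fmlset_{0,i}\cup\fmlset^\fml_j\). Both components have maximal rank below \(\max\{\rankof{\fmlb}:\fmlb\in\fmlset\}\), so by the closure property of \(\provsetrel\) under unions these right-hand sides are \(\provsetrel\fmlset\).

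\textbf{Single-formula step: easy cases.} This is a case distinction on the outermost operator of the box-free \strictfml non-basic \(\fml\).

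For \(\fml_1\lor\fml_2\), use \irref{orR}. For \(\fml_1\land\fml_2\), use \irref{andR}. Both are strict decreases by \eqcref{prop:orderprop}{it:orderconju}, since the rank is \(\max+1\).

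For \(\lforallbdd[\syninterv]{\ivar}\fml_1\), use \irref{forallR}, renaming \(\ivar\) if necessary. The new assumption \(\ivar\in\syninterv\) keeps the sequent robust, and \eqcref{prop:orderprop}{it:existentialred} gives the rank decrease.

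For \(\ddiamond{\ptest{\fmlb}}\fml_1\), \(\ddiamond{\pchoice{\prog}{\progb}}\fml_1\) and \(\ddiamond{\prog;\progb}\fml_1\), use \irref{C} with \irref{diatest}, \irref{diachoice} and \irref{diacompose}, together with \Cref{prop:orderprop}. These axioms are equivalences, so validity is preserved.

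For assignments, use \irref{diaassignequal}. Here I must check that \(\fmlsafereplacevarby[\fml_1]{\ivar}{\term}\) can be treated as robust, because its universal quantifiers are pinned by equations. If that fails, I would instead simply decompose the assignment semantically, which validity permits.

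\textbf{Single-formula step: compactness cases.} All remaining cases rely on one compactness argument. The assumptions \(\fmlsetb\) bound every free variable, so the set \(K\) of relevant states satisfying \(\fmlsetb\) is compact (the bounding terms are continuous in the earlier variables). The set \(\semopen{\fmlset_0,\fmlset'}\) is open by \Cref{thm:semantictopology}, so \(K'=K\setminus\semopen{\fmlset_0,\fmlset'}\) is compact. By validity, \(K'\subseteq\semopen{\fml}\).

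For \(\lexists{\ivar}\fml_1\): openness of \(\semopen{\fml_1}\) shows that the open sets \(\semopen{\ddiamond{\humod{\ivar}{\ratconst}}\fml_1}\), for rational \(\ratconst\), cover \(K'\). Extract a finite subcover \(\ratconst_1,\ldots,\ratconst_m\). The sequent
\[\lsequent{\fmlsetb}{\ddiamond{\humod{\ivar}{\ratconst_1}}\fml_1,\ldots,\ddiamond{\humod{\ivar}{\ratconst_m}}\fml_1,\fmlset_0,\fmlset'}\]
is valid. It derives the goal by \(m\) uses of \irref{existsR} with contraction, and it has lower rank by \eqcref{prop:orderprop}{it:existentialred}.

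For \(\ddiamond{\prepeat{\prog}}\fml_1\): the sets \(\semopen{\ddiamond{\piter{\prog}{n}}\fml_1}\) are open by \Cref{thm:semantictopology}, increasing in \(n\), and cover \(K'\). So one \(N\) suffices, and \irref{starfinite} together with \eqcref{prop:orderprop}{it:repeat} does the step.

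\textbf{Main obstacle: continuous programs.} The hardest case is \(\ddiamond{\pevolvein{\synodevec{\ivar}{\term}}{\fmlb}}\fml_1\). I would proceed as follows.
\begin{enumerate}
\item Apply \irref{diaodebound}, then use the same compactness argument to choose a rational \(\evolutionboundterm\) uniformly over \(K'\). This is possible because the bounded diamond is open and increasing in \(\evolutionboundterm\).
\item Apply \irref{diaode} and again instantiate \(\fbdd,\lipsch,h\) by rationals that work on all of \(K'\). This needs openness of the right-hand side of \irref{diaode} in the existentially quantified parameters. Finitely many choices suffice, giving a disjunction as in the \(\lexists{\ivar}\fml_1\) case.
\item Discharge \(\secondderbound\) by \irref{qear}, because it is a bounded real-arithmetic statement once the parameters are rational.
\item Check that the remaining iteration formula is \(\provrel\) the original by \eqcref{prop:orderprop}{it:orderode}, and that it is again a box-free strict formula. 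The \(\varepsilon\)-interiors involve only bounded universal quantifiers, so this holds.
\end{enumerate}
The delicate points are making these uniform rational choices simultaneously and keeping all the new free variables (\(\varepsilon\), \(h\), \(\evolutionboundterm\)) bounded on the left of the sequent so that it stays robust.
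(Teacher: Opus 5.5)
Your proposal is correct. Its skeleton matches the paper's proof: induction on the size of \(\fmlset\) for all \(\fmlset'\), reducing one formula and then recombining through the induction hypothesis via closure of \(\provsetrel\) under unions. You also use the same axioms for tests, choice, composition and assignment, \(\forall\)R with renaming, and the same compactness argument followed by \irref{starfinite} for loops.

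You depart from the paper in two places. First, for \(\lor\) you use that the rank is \(\max+1\), which gives a strict drop directly. The paper only records \(\provrelhalf\) in \Cref{prop:orderprop} and so runs a nested induction on formula length; yours is shorter.

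Second, and more substantially, for \(\exists x\,\fml_1\) and for the parameters of \irref{diaode} you take finitely many rational instances from a finite subcover of the compact set \(K'\). You then apply \irref{existsR} repeatedly, with contraction free because sequents are sets. The paper instead asserts a single rational witness and passes to \(\fmlsafereplacevarby[\fml_1]{x}{q}\). Your version is the one that actually works once the antecedent bounds free variables. The valid strict \rrdL sentence \(\forall_{[0,1]}x\,\exists y\,(x<y\land y<x+\tfrac12)\) reduces by \(\forall\)R to \(0\le x\land x\le 1\vdash\exists y\,(x<y\land y<x+\tfrac12)\). No single rational \(q\) satisfies \(x<q<x+\tfrac12\) on all of \([0,1]\), since \(x=1\) forces \(q>1\) and \(x=0\) forces \(q<\tfrac12\). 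Three values, \(q\in\{\tfrac25,\tfrac45,\tfrac65\}\), already suffice. Keeping \(\langle x:=q\rangle\fml_1\) on the right also keeps the premise inside strict \rrdL. By contrast, \(\fmlsafereplacevarby[\fml_1]{x}{q}\) contains an unbounded \(\forall z\) that no case of the analysis handles.

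A few loose ends remain; the first two are no worse than in the paper, the third concerns your step 3.

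In the assignment case, your fallback ``decompose the assignment semantically'' is not a derivation step. A concrete route is to follow \irref{diaassignequal} by the standard unbounded \(\forall\)R and \(\to\)R, renaming the inner \(x\) to a fresh \(x'\). Then \(z=x\) and \(x'=\term^{z}_{x}\) move into the antecedent, where they serve as the interval bounds a robust sequent requires. The succedent becomes the renamed \(\fml_1\), whose rank equals \(\rankof{\fml_1}\) and lies below \(\rankof{\langle x:=\term\rangle\fml_1}\).

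In the ODE case with several tuples \((\fbdd,\lipsch)\), a single instance of \(\secondderbound\) need not be valid on its own. So in step 3 you should keep these basic, lower-rank conjuncts as premises after \(\land\)R rather than closing them by \irref{qear}. Alternatively, fix one rational \(k\) (the bounded diamond is monotone in \(k\)) and then one pair \(\fbdd,\lipsch\) large enough on the compact set. Then \(\secondderbound\) closes by \irref{qear} under \(\fmlsetb\), and only \(h\) needs finitely many values.

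Finally, compactness of \(K\) needs each bound in \(\fmlsetb\) to mention only previously bounded variables. This is not part of the definition of a robust sequent, but fresh renaming in \(\forall\)R maintains it, and the paper relies on the same invariant implicitly.
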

    \begin{proof}[Proof of \Cref{lem:auxlemmb}]
      This claim is shown by induction on the length of \(\fmlset\) for all \(\fmlset'\).
      So suppose the claim holds for \(\fmlset\) and consider a valid robust sequent \(\lsequent{\fmlsetb}{\posrfml,\fmlset,\fmlset'}\).
      By the induction hypothesis on \(\fmlset\), it suffices to find a list of valid robust sequents \(\lsequent{\fmlsetb_1}{\fmlset_1,\fmlset,\fmlset'},\ldots, \lsequent{\fmlsetb_n}{\fmlset_n,\fmlset,\fmlset'}\) with \(\fmlset_1\provsetrel\posrfml,\ldots,\fmlset_1\provsetrel\posrfml\) such that \dL derives the following
      \begin{sequentdeduction}
        \linfer
        {
          \lsequent{\fmlsetb_1}{\fmlset_1,\fmlset,\fmlset'} & \ldots & \lsequent{\fmlsetb_n}{\fmlset_n,\fmlset,\fmlset'}
        }
        {
          \lsequent{\fmlsetb}{\posrfml,\fmlset,\fmlset'}
        }
      \end{sequentdeduction}
      To find such sequents proceed by another induction on the length of formula \(\fml\).

      \begin{caselist}
        \caseof{\(\posrfml_1\lor\posrfml_2\)}
        By applying the claim to the formulas \(\posrfml_1\) and \(\posrfml_2\),  which are shorter, note that \dL derives
        \begin{sequentdeduction}
          \linfer[orR]
          {
            \linfer
            {
              \linfer
              {
                \lsequent{\fmlsetb_1}{\fmlset_1^1,\fmlset_1^2,\fmlset,\fmlset'} & \ldots
              }
              {
                \lsequent{\fmlsetb_1}{\fmlset_1^1,\posrfml_2,\fmlset,\fmlset'}
              }
              & \ldots
            }
            {
              \lsequent{\fmlsetb}{\posrfml_1,\posrfml_2,\fmlset,\fmlset'}
            }
          }
          {
            \lsequent{\fmlsetb}{\posrfml_1\lor\posrfml_2,\fmlset,\fmlset'}
          }
        \end{sequentdeduction}
        Validity is preserved by the inductive hypothesis and \(\fmlset_i^1\provrel\posrfml_1\) and \(\fmlset_i^2\provrel\posrfml_2\) so \(\fmlset_i^1,\fmlset_i^2\provrel\posrfml_1,\posrfml_2\provrelhalf\posrfml_1\lor\posrfml_2\) by \eqcref{prop:orderprop}{it:orderdisj}.

        \caseof{\(\posrfml_1\land\posrfml_2\)}
        Letting \(\fmlsetb_1=\fmlsetb\) and \(\fmlset_1=\{\posrfml_1\}\) and \(\fmlset_2=\{\posrfml_2\}\) is sufficient by \irref{andR} and \eqcref{prop:orderprop}{it:orderconju} since both sequents are clearly valid.

        \caseof{\(\lexists{\ivar}\posrfml\)}
        Note that as \(\semopen{\posrfml}\) is open and the sequent is valid, there is a rational witness \(\ratconst\in\rationals\) for \(\ivar\).
        Choosing \(\fmlsetb_1=\fmlsetb\) and \(\fmlset_1 = \{\fmlsafereplacevarby[\posrfml]{\ivar}{\ratconst}\}\) is as required, by \irref{existsR} and \eqcref{prop:orderprop}{it:existentialred}.

        \caseof{\(\lforallbdd[\syninterv]{\ivar}\posrfml\)}
        By uniform renaming, without loss of generality, \(\ivar\) does not appear in \(\fmlsetb_1\), \(\fmlset\) or \(\fmlset'\).
        So letting \(\fmlsetb_1=\fmlset\cup\{\ivar\in\syninterv\}\) and \(\fmlset_1=\{\posrfml\}\) works by \irref{forallR} and \eqcref{prop:orderprop}{it:existentialred}.

        \caseof{\(\ddiamond{\posrprog}\)}
        Distinguish based on the top level connective of \(\posrprog\).
        For the cases of assignment \(\humod{\ivar}{\term}\), test \(\ptest{\posrfml}\), choice \(\pchoice{\posrprog}{\posrprogb}\) and composition programs \({\posrprog};{\posrprogb}\) the reduction is straightforward by axioms \irref{diaassignequal}, \irref{diatest}, \irref{diachoice}, \irref{diacompose} and properties \eqref{it:orderassign}, \eqref{it:ordertest}, \eqref{it:orderchoose}, \eqref{it:ordercompose} of \Cref{prop:orderprop}, respectively.
        Only the cases of infinite loops and differential equations remain.

        \caseof{\(\ddiamond{\prepeat{\prog}}\posrfml\)}
        Note that validity of sequent \(\lsequent{\fmlsetb}{\ddiamond{\prepeat{\posrprog}}\fml,\fmlset,\fmlset'}\) means that \(\bigcap_{\negfmlb\in\fmlsetb}\semclosed{\negfmlb}\cap\bigcap_{\posrfmlb\in\fmlset\cup\fmlset'}\complement{{\semopen{\posrfmlb}}}\subseteq \bigcup_{n=0}^\infty\semopen{\ddiamond{\piter{(\posrprog)}{n}}\fml}\).
        Note that the left hand side of this inclusion is closed as the intersection of closed sets by \Cref{thm:semantictopology}.
        Since \(\fmlsetb\) bounds all relevant variables, we can assume without loss of generality that the left hand side is compact.
        Since each \(\semopen{\ddiamond{\piter{(\posrprog)}{n}}\fml}\) is open, there is some \(n\) such that \(\bigcap_{\negfmlb\in\fmlsetb}\semclosed{\negfmlb}\cap\bigcap_{\posrfmlb\in\fmlset\cup\fmlset'}\complement{{\semopen{\posrfmlb}}}\subseteq \semopen{\ddiamond{\piter{(\posrprog)}{n}}\fml}\).
        Hence \(\lsequent{\fmlsetb}{\ddiamond{\piter{(\posrprog)}{n}}\fml,\fmlset,\fmlset'}\) is valid and choosing \(\fmlsetb_1=\fmlsetb\) and \(\fmlset_1=\{\ddiamond{\piter{(\posrprog)}{n}}\posrfml\}\) is as required by \irref{starfinite} and \eqcref{prop:orderprop}{it:repeat}.

        \caseof{\(\ddiamond{\pevolvein{\synodevec{\ivar}{\term}}{\posrfml}}\)}
        By using the Euler axiom \irref{diaode} and reducing the fresh logical connectives as in the previous cases, the sequent can be reduced to one of the form
        \begin{sequentdeduction}
          \linfer[diaodebound]
          {
            \linfer
            {
              \lsequent{\fmlsetb_,\varepsilon=0}{\ddiamond{\prepeat{(\ptest{\epsint[\varepsilon]{\posrfmlb}};\iterationprogabb)}}(\epsint[\varepsilon]{\posrfml}\land \varepsilon <1),\fmlset,\fmlset'}
            }
            {
              \vdots
            }
          }
          {
            \lsequent{\fmlsetb}{ \ddiamond{\pevolvein{\synodevec{\ivar}{\term}}{\posrfmlb}}\posrfml,\fmlset,\fmlset'}
          }
        \end{sequentdeduction}
        where \(\evolutionboundterm,\fbdd,\lipsch\) and \(h\) have been replaced by rational constants.
        The claim follows by \eqcref{prop:orderprop}{it:orderode}.
      \end{caselist}
    \end{proof}

    \begin{proof}[Proof of \Cref{lem:auxlemm}]
      Consider a valid robust sequent \(\lsequent{\fmlsetb}{\fmlset}\) and let \(\fmlset_0\) be the non-basic and \(\fmlset'\) be the basic formulas of \(\fmlset\).
      If \(\fmlset_0=\emptyset\), then \(\lsequent{\fmlsetb}{\fmlset}\) is derivable by \irref{qear} as it is a valid sequent of first-order real arithmetic.
      In other words, the empty list of robust sequents derives \(\lsequent{\fmlsetb}{\fmlset}\) and witnesses the lemma.

      If \(\fmlset_0\neq\emptyset\), by \Cref{lem:auxlemmb} there are valid robust sequents \(\lsequent{\fmlsetb_1}{\fmlset_1,\fmlset'},\ldots, \lsequent{\fmlsetb_n}{\fmlset_n,\fmlset'}\)
      from which \dL derives \(\lsequent{\fmlsetb}{\fmlset_0,\fmlset'}\) and \(\fmlset_1\provsetrel\fmlset_0,\ldots,\fmlset_1\provsetrel\fmlset_0\).
      As the rank of the basic formulas is \(0\), moreover \(\fmlset_1,\fmlset'\provsetrel\fmlset\) and \ldots \(\fmlset_n,\fmlset'\provsetrel\fmlset\).
    \end{proof}

    To conclude the proof of \Cref{thm:maincompleteness}, apply \Cref{lem:auxlemm} repeatedly, beginning with the valid robust sequent \(\lsequent{}{\fml}\) for a box-free, \strictfml \rrdL sentence.
    The result is a (possibly infinite) derivation tree for \(\prov{\fml}\).
    If this tree were infinite, there would be an infinite branch through the constructed tree by König's Lemma.
    However, every application of \Cref{lem:auxlemm} on this branch reduces the succedent of the sequent with respect to \(\provsetrel\) by construction (\Cref{lem:auxlemm}).
    Such an infinite \(\provsetrel\)-descending sequence in \(\setoffmlsets\) contradicts well-foundedness of \(\provsetrel\).
    So the derivation is finite and as such the required proof.
  \end{proof}

  \section{Additional Proofs}
  \printProofs

\fi

\end{document}